\documentclass[12pt]{article}
\usepackage{latexsym,amsmath}
\usepackage{graphicx}
\usepackage{caption}
\usepackage{subcaption}
\usepackage{float}
\usepackage{amsfonts}
\usepackage{isomath}
\usepackage{amssymb,array}
\usepackage{epsfig}
\usepackage{color}
\usepackage{setspace}
\usepackage{amsthm}
\usepackage{mathrsfs}
\usepackage[margin=1in]{geometry}
\newtheorem{thm}{Theorem}[section]
\newtheorem{lem}{Lemma}[section]
\newtheorem{re}{Remark}[section]

\newtheorem{p}{Proposition}[section]

\newcommand{\keywords}[1]{\textbf{\textit{Keywords and Phrases: }}#1}
\newcommand{\MSC}[1]{\textbf{\textit{Mathematics Subject Classification--}}#1}

\begin{document}
	
	\title{Estimation of Tsallis entropy and its applications to goodness-of-fit tests}
	\author{$\text{Siddhartha Chakraborty}^{\text{a}}, \text{Asok K. Nanda}^{\text{a}}, \text{Narayanaswamy Balakrishnan}^{\text{b,c}}$\\$^{\text{a}}$ Department of Mathematics and Statistics\\ Indian Institute of Science Education and Research Kolkata\\ West Bengal, India;\\$^{\text{b}}$ Department of Mathematics and Statistics,\\ McMaster University, Hamilton, Ontario, Canada\\$^{\text{c}}$ Department of Mathematics,\\ Atilim University, Ankara, Turkey}
	\date{}
	\maketitle
	\begin{abstract}
		In this paper, we consider the problem of estimating Tsallis entropy from a given data set. We propose four different estimators for Tsallis entropy measure based on higher-order sample spacings, and then discuss estimation of Tsallis divergence measure. We compare the performance of the proposed estimators by means of bias and mean squared error and also examine their robustness to outliers. Next, we propose a spacings-based estimator for Tsallis entropy under progressive type-II censoring and study its performance using Monte Carlo simulations. Another estimator for Tsallis entropy is proposed using quantile function and its consistency and asymptotic normality are studied, and its performance is evaluated through Monte Carlo simulations. Goodness-of-fit tests for normal and exponential distributions as applications are developed using Tsallis divergence measure. The performance of the proposed tests are then compared with some known tests using simulations and it is shown that the proposed tests perform very well. Also, an exponentiality test under progressive type-II censoring is proposed, its performance is compared with an existing entropy-based test using simulation. It is observed that the proposed test performs well. Finally, some real data sets are analysed for illustrative purposes.
	\end{abstract}
	\keywords{Empirical influence function, goodness-of-fit tests, Kullback-Leibler divergence, order statistics, power analysis, progressive censoring, quantile density function}\\
	
	\noindent\MSC{62G05, 62N02, 94A17}
	
\section{Introduction}\label{S1}
Entropy is the foundation of information theory, and it was first formulated by Shannon (1948), independently of Wiener (1948). Entropy for a continuous random variable (rv) $X$ is defined as
\begin{equation}\label{e1}
H(X)=-\int_{A} f(x)\log f(x)dx=E\left[-\log f(X)\right] ,
\end{equation}
where log is the natural logarithm, $f$ is the probability density function (pdf) of $X$, $A$ is the support of $X$, and $E$ is the expectation operator. Entropy is additive in nature, i.e., for two independent rvs $X$ and $Y$, the joint entropy of $X$ and $Y$ is equal to the sum of the individual entropies of $X$ and $Y$; that is, $H(X,Y)=H(X)+H(Y)$.
Apart from information theory, entropy has found key applications in many different fields such as physics, mathematical biology, machine learning, economics and statistics (Cover and Thomas, 2006).\\

\noindent Entropy has been studied extensively in the literature and numerous generalizations of it have been proposed. One of the important generalizations is the R\'enyi entropy, due to R\'enyi (1961). This measure, for a continuous rv $X$, is defined as 

\begin{equation}\label{e2}
H_{\alpha}(X)=\frac{1}{1-\alpha}\log\int_{A}f^{\alpha}(x)dx,\;\;\;0<\alpha\neq 1.
\end{equation}

\noindent Another important generalizations is due to Tsallis (1988), which is given by
\begin{equation}\label{e3}
T_{\alpha}(X)=\frac{1}{\alpha-1}\left( 1-\int_{A}f^{\alpha}(x)dx\right),\;0<\alpha\neq1. 
\end{equation}
R\'enyi and Tsallis entropies are closely related and one can express R\'enyi entropy in terms of Tsallis entropy as
 $$H_{\alpha}(X)=\frac{1}{\alpha-1}\log\left(1-(\alpha-1)T_{\alpha}(X)\right)$$
  if we take $A=[0,\infty)$. The parameter $\alpha$ is called the generalizing parameter which makes R\'enyi and Tsallis entropies more flexible than Shannon entropy. When $\alpha\to 1$, both R\'enyi and Tsallis entropies reduce to the Shannon entropy. When $\alpha>1$, R\'enyi and Tsallis entropies give more weight to high probability events and when $\alpha<1$, they give more weight to rare events. By tuning the generalizing parameter $\alpha$, both R\'enyi and Tsallis entropies can be made to prioritize (penalize) common or rare events, which make them more adventageous than Shannon entropy. Like Shannon entropy, R\'enyi entropy is additive, but Tsallis entropy is non-additive. This non-additivity of Tsallis entropy makes it more useful in studying nonequilibrium and complex systems such as gravity and plasma (see Tsallis, 2009). For more applications of Tsallis entropy in different fields, one may refer to Cartwright (2014), Zhang and Wu (2011), Telesca (2011), and the references therein.\\

\noindent Shannon relative entropy, also known as the Kullback-Leibler (KL) divergence, is widely used as a measure of closeness (dissimilarity) between two distributions. The KL divergence between two random variables $X$ and $Y$, with respective pdfs $f$ and $g$, and having common support $A$, is defined as 

\begin{equation}\label{e4}
K(X;Y)=\int_{A} f(x)\log\frac{f(x)}{g(x)}dx.
\end{equation}

 \noindent Note that $K(X;Y)\geq0$ and equality holds when $X$ and $Y$ have the same distribution. So, a large value of $K(X;Y)$ indicates dissimilarity between the distributions of $X$ and $Y$. Like KL divergence, there exist divergence measures based on R\'enyi and Tsallis entropies. The R\'enyi divergence is defined as 
 \begin{equation}\label{e5}
 H_{\alpha}(X,Y)=\frac{1}{\alpha-1}\log\left( \int_{A} f^{\alpha}(x)g^{1-\alpha}(x)dx\right),\;\;\;\alpha(\neq 1)>0,
 \end{equation}

\noindent and the Tsallis divergence is defined as

\begin{equation}\label{e6}
T_{\alpha}(X;Y)=\frac{1}{\alpha-1}\left(\int_{A} f^{\alpha}(x)g^{1-\alpha}(x)dx-1\right),\;\alpha(\neq 1)>0.
\end{equation}
Like KL divergence, $H_{\alpha}(X,Y)\geq0$ and $T_{\alpha}(X,Y)\geq0$ and equalities hold when $X$ and $Y$ have the same distribution. When $\alpha\to1$, R\'enyi and Tsallis divergence become KL divergence. Both R\'enyi and Tsallis divergence are related to $\alpha$-divergence which, for two continuous rvs $X$ and $Y$ having same support $A$, is defined as $$\alpha(X,Y)=\int_{A} f^{\alpha}(x)g^{1-\alpha}(x)dx.$$ For elaborate discussions on these measures and their generalizations, interested readers may refer to Nielsen \& Nock (2010), Van Erven and Harremos (2014), Li and Turner (2016), and Khammar \& Noughabi (2022).\\

\noindent Estimation of entropy from a data set is an important problem and it has been addressed by many authors over the years. Since the estimation of Shannon entropy by Vasicek (1976), various estimators have been introduced in the literature; see, for example, Van Es (1992), Ebrahimi et al. (1994), Correa (1995), Noughabi (2010), Al-Omari (2014), and Noughabi and Arghami (2022). Although estimation of entropy has gained a lot of attention, estimation of R\'enyi and Tsallis entropies have not been studied to the same extent. Hegde et al. (2005) proposed order statistics based $m$-spacings estimator for R\'enyi entropy and recently, Al-Labadi et al. (2025) introduced an estimator for R\'enyi entropy motivated by the works of Vasicek (1976) and Ebrahimi et al. (1994). Kernel-based estimation of Tsallis entropy has been studied recently for $\rho$-mixing dependent data and length-biased data; see Maya et al. (2024) and Zamini et al. (2024). To the best of our knowledge, order statistics based non-parametric estimators for Tsallis entropy have not been considered in the literature yet. In this paper, we propose some non-parametric estimators for Tsallis entropy using order statistics and compare their performance in terms of bias and mean squarred error. Also, we propose an estimator for Tsallis divergence measure. An estimator for Tsallis entropy under progressive type-II censored data is also developed and its performance is examined through Monte Carlo simulations. Another estimator for Tsallis entropy based on the derivative of the quantile function (qf) of the underlying rv is introduced and its asymptotic properties are studied. Using the estimator for Tsallis divergence, we also develop goodness-of-fit tests for normal and exponential distributions. Goodness-of-fit test for exponential distribution under progressive type-II censored sample is also discussed.\\

\noindent The rest of the paper is organized as follows. We introduce order statistics based non-parametric estimators for Tsallis entropy in Section \ref{S2}. Estimator for Tsallis divergence is studied next in Section \ref{S3}. The performance of these estimators and their robustness to outliers are compared through Monte Carlo simulations in Section \ref{S4}. An estimator for Tsallis entropy under progressive type-II censoring is discussed in Section \ref{S5}. The quantile-based estimator is studied in Section \ref{S6} and its performance is investigated in Section \ref{S7}.  Applications to some real data sets are discussed in Section \ref{S8} wherein goodness-of-fit tests for normal and exponential distributions are explored for complete and progressively type-II censored samples. Finally, some brief concluding remarks are made in Section \ref{S9}.

\section{Estimators for Tsallis entropy based on $m$-spacings}\label{S2}
In this section, we discuss non-parametric estimators for Tsallis entropy measure based on higher order sample spacings. Let $X_1$, $\cdots$,$X_n$ be a random sample of size $n$ drawn from a distribution with cumulative distribution function (cdf) $F$, and let $X_{(1)}\leq X_{(2)}\leq\cdots\leq X_{(n)}$ be the corresponding order statistics. Let $F_n$ be the empirical distribution function of $X$. \\

\noindent Using the transformation $F(x)=w$, entropy in (\ref{e1}) can be expressed as 

\begin{equation}\label{e7}
H(X)=\int_{0}^{1}\log\left(\frac{d}{dw}F^{-1}(w)\right)dw.
\end{equation}

Vasicek (1976) first proposed an estimator for entropy by estimating $\frac{d}{dw}\left(F^{-1}(w) \right)$ from the slope of the line joining the points $(F(X_{(i+m)}),X_{(i+m)})$ and $(F(X_{(i-m)}),X_{(i-m)})$. Here, $m$ is referred as the window size, which is a positive integer less than $\frac{n}{2}$.\\

\noindent Note that $X_{(i)}=X_{(1)}$ if $i<1$ and $X_{(i)}=X_{(n)}$ if $i>n$. Now, by using the empirical estimator, the Vasicek entropy estimator can be presented as

\begin{equation}\label{e8}
\hat{H}(X)=\frac{1}{n}\sum_{i=1}^{n}\log\left(\frac{X_{(i+m)}-X_{(i-m)}}{2m/n}\right).
\end{equation}

\noindent Analogous to Vasicek entropy estimator, we propose an estimator for Tsallis entropy, upon expressing 
\begin{equation}\label{e9}
T_{\alpha}(X)=\frac{1}{\alpha-1}\left(1-\int_{0}^{1}\left( \frac{d}{dw}F^{-1}(w)\right)^{1-\alpha}dw \right), 
\end{equation}

\noindent as
 
\begin{equation}\label{e10}
T_{\alpha}V=\frac{1}{\alpha-1}\left(1-\frac{1}{n}\sum_{i=1}^{n}\left(\frac{X_{(i+m)}-X_{(i-m)}}{2m/n}\right)^{1-\alpha}\right). 
\end{equation}

Here $V$ in $T_{\alpha}V$ emphasizes that this is a Vasicek-type estimator. Note that when $\alpha\to1$, $T_{\alpha}V$ reduces to $\hat{H}(x)$ defined in (\ref{e8}).\\

Hegde et al. (2005) proposed estimators for R\'enyi entropy measure by estimating the density as $$\widehat{f}(x)=\frac{F_n(X_{(i+1)})-F_n(X_{(i)})}{(X_{(i+1)}-X_{(i)})}=\frac{1}{(n+1)(X_{(i+1)}-X_{(i)})}.$$ Using the estimator $\hat{f}(x)$, another estimator for Tsallis entropy can be presented as

\begin{eqnarray}\label{e11}
\hat{T}_{\alpha}(X)&=&\frac{1}{\alpha-1}\left(1-\int_{-\infty}^{+\infty}\hat{f}^{\alpha}(x)dx\right)\nonumber\\
&\approx&\frac{1}{\alpha-1}\left(1-(n+1)^{1-\alpha}\frac{1}{(n-1)}\sum_{i=1}^{n-1}\left(X_{(i+1)}-X_{(i)}\right)^{1-\alpha}\right).
\end{eqnarray}
Learned-Miller and Fisher (2003) noted that single-spacings estimators suffer from a large variance due to the single-interval dependency of the order statistics. This can be avoided by using $m$-spacings estimators. From (\ref{e11}), we can define an $m$-spacings estimator to be

\begin{eqnarray}\label{e12}
T_{\alpha}H&=&\frac{1}{\alpha-1}\left(1-\left( \frac{n+1}{m}\right) ^{1-\alpha}\frac{1}{(n-m)}\sum_{i=1}^{n-m}\left(X_{(i+m)}-X_{(i)}\right)^{1-\alpha}\right)\nonumber\\
&=&\frac{1}{\alpha-1}\left(1-\frac{1}{(n-m)}\sum_{i=1}^{n-m}\left(\frac{X_{(i+m)}-X_{(i)}}{m/(n+1)}\right)^{1-\alpha}\right)
\end{eqnarray}
When $\alpha\to1$, the above estimator reduces to

\begin{equation}\label{e13}
T_1H=\frac{1}{n-m}\sum_{i=1}^{n-m}\log\left(\frac{X_{(i+m)}-X_{(i)}}{m/(n+1)}\right). 
\end{equation}
The estimator in (\ref{e13}) is an estimator for entropy and it is also based on Vasicek entropy estimator in (\ref{e8}).\\

In order to estimate the slope more accurately at the extreme points, Ebrahimi et al. (1994) modified Vasicek entropy estimator as

\begin{equation}\label{e14}
\hat{H}_E(X)=\frac{1}{n}\sum_{i=1}^{n}\log\left(\frac{X_{(i+m)}-X_{(i-m)}}{C_im/n}\right),
\end{equation}

\noindent where $C_i$ is the weight function given by
$$C_i=
\begin{cases}
1+\frac{i-1}{m}\;\;\;\;\mbox{if}\;\;\;1\leq i\leq m,\\
2\;\;\;\;\;\;\;\;\;\;\;\;\;\mbox{if}\;\;\;m+1\leq i \leq n-m,\\
1+\frac{n-i}{m}\;\;\;\;\mbox{if}\;\;\;n-m+1\leq i \leq n.
\end{cases}$$

\noindent Now, upon using this modification, we propose an Ebrahimi-type estimator for Tsallis entropy as

\begin{equation}\label{e15}
T_{\alpha}E=\frac{1}{\alpha-1}\left(1-\frac{1}{n}\sum_{i=1}^{n}\left(\frac{X_{(i+m)}-X_{(i-m)}}{C_im/n}\right)^{1-\alpha}\right). 
\end{equation}

Recently, Noughabi and Arghami (2022) introduced an estimator for entropy by replacing the weights $C_i$ in (\ref{e14}) by a weight function $W_i$, defined as

$$W_i=
\begin{cases}
1,\;\;\;\;\mbox{if}\;\;\;1\leq i\leq m,\\
2,\;\;\;\;\mbox{if}\;\;\;m+1\leq i \leq n-m,\\
1,\;\;\;\;\mbox{if}\;\;\;n-m+1\leq i \leq n.
\end{cases}$$

\noindent This is a modification of Ebrahimi's estimator, where $W_i=\min\limits_i C_i$. They showed that this estimator outperforms the other estimators of entropy. Using the weights $W_i$, yet another estimator for Tsallis entropy can be presented as

\begin{equation}\label{e16}
T_{\alpha}W=\frac{1}{\alpha-1}\left(1-\frac{1}{n}\sum_{i=1}^{n}\left(\frac{X_{(i+m)}-X_{(i-m)}}{W_im/n}\right)^{1-\alpha}\right). 
\end{equation}

\noindent The following theorem states the consistency of the proposed estimators. Proofs follow from the works of Vasicek (1976) and Ebrahimi et al. (1992), and is therefore omitted for brevity.

\begin{thm}
	Let $X_{1},\cdots,X_{n}$ be a random sample from a continuous distribution with cdf $F$ and pdf $f$. Then, as $m,n\to \infty$ and $\frac{m}{n}\to 0$, the estimators $T_{\alpha}V$, $T_{\alpha}H$, $T_{\alpha}E$ and $T_{\alpha}W$ all converge to the true value $T_{\alpha}(X)$.
\end{thm}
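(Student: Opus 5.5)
We prove convergence in probability of each estimator to $T_\alpha(X)$. Because of the factor $\frac{1}{\alpha-1}(1-\cdot)$, it suffices to show that the inner average converges to
\[
I_\alpha=\int_0^1\bigl(\tfrac{d}{dw}F^{-1}(w)\bigr)^{1-\alpha}dw=\int f^{\alpha}(x)\,dx.
\]
We assume $I_\alpha<\infty$, together with the regularity conditions that Vasicek (1976) uses; see the remark on the obstacle below.

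\textbf{Step 1: reduction to uniform spacings.} Follow Vasicek's decomposition. Put $U_{(i)}=F(X_{(i)})$, which are uniform order statistics. By the mean value theorem,
\[
X_{(i+m)}-X_{(i-m)}=\frac{U_{(i+m)}-U_{(i-m)}}{f(\xi_i)},\qquad \xi_i\in(X_{(i-m)},X_{(i+m)}).
\]
Each summand of $T_\alpha V$ therefore factors as
\[
\Bigl(\frac{n}{2m}\bigl(U_{(i+m)}-U_{(i-m)}\bigr)\Bigr)^{1-\alpha}\,f(\xi_i)^{\alpha-1}.
\]
The sum then splits into three terms:
\begin{itemize}
\item[(a)] $\frac1n\sum_i f(X_{(i)})^{\alpha-1}$;
\item[(b)] the error from replacing $f(\xi_i)$ by $f(X_{(i)})$;
\item[(c)] the error from replacing the normalized uniform spacing $\frac{n}{2m}(U_{(i+m)}-U_{(i-m)})$ by $1$.
\end{itemize}
Term (a) is $\frac1n\sum_j f(X_j)^{\alpha-1}$. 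By the strong law of large numbers it converges to $E[f(X)^{\alpha-1}]=\int f^\alpha=I_\alpha$.

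\textbf{Step 2: the spacing term (c).} Uniform $2m$-spacings satisfy
\[
\frac{n}{2m}\bigl(U_{(i+m)}-U_{(i-m)}\bigr)\sim \mathrm{Beta}(2m,\,n-2m+1)\cdot\frac{n}{2m},
\]
which has mean close to $1$ and variance $O(1/m)$. Hence its $(1-\alpha)$-th power tends to $1$ in $L^2$, uniformly in $i$ away from the boundary, as $m\to\infty$. Combine this with Cauchy--Schwarz against $\frac1n\sum f(\xi_i)^{2(\alpha-1)}$. The boundary indices $i\le m$ or $i>n-m$, where truncation $X_{(i)}=X_{(1)}$ or $X_{(n)}$ applies, number only $2m=o(n)$. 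Their contribution is $o(1)$ provided the summands are uniformly integrable.

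\textbf{Step 3: the localization term (b).} We need $f(\xi_i)$ and $f(X_{(i)})$ to be asymptotically equal. The points $\xi_i$ lie in a window whose $F$-mass is about $2m/n\to0$. Following Vasicek, approximate $f\circ F^{-1}$ by a step function on a fixed partition of $(0,1)$ into intervals. Within each interval the ratio $f(\xi_i)/f(X_{(i)})$ tends to $1$ uniformly, since the windows shrink in probability-scale. The approximation error is then controlled by letting the partition refine after $n\to\infty$.

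\textbf{Other estimators and main obstacle.} For $T_\alpha H$, $T_\alpha E$ and $T_\alpha W$ the argument is identical. Only the normalizing constants $m/(n+1)$, $C_im/n$ and $W_im/n$ change, and these differ from Vasicek's only at the $O(m)$ boundary indices, or by a factor $(n+1)/n\to1$. So their difference from $T_\alpha V$ is $o_p(1)$. The main obstacle is Steps 2--3 when $\alpha<1$ or $\alpha>1$: the power $1-\alpha$ is unbounded near $0$ or $\infty$, whereas Vasicek controlled only a logarithm. I would handle this by imposing a moment condition such as $\int f^{\alpha'}<\infty$ for some $\alpha'$ slightly beyond $\alpha$. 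That condition gives the uniform integrability needed for the tail and boundary indices, using the explicit Beta moments $E[\mathrm{Beta}^{1-\alpha}]$, which are finite for $m>\alpha-1$.
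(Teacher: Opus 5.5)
Your decomposition (a sample mean of $f(X_i)^{\alpha-1}$, a uniform-spacing factor, and a localization error) is the route the paper has in mind; the paper itself only cites Vasicek and Ebrahimi et al. However, the step you flag as the obstacle is a genuine gap for $\alpha<1$, and the moment condition you propose does not close it.

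Near the ends of the sample the summands are governed by extreme order statistics, not by any integral of $f$. For the roughly $m$ indices $i\ge n-m$, truncation gives $X_{(i+m)}=X_{(n)}$ and $X_{(i-m)}\le X_{(n-m)}$. So these terms alone contribute at least about $\frac{m}{n}\bigl(\frac{n}{2m}(X_{(n)}-X_{(n-m)})\bigr)^{1-\alpha}$.

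Here is a counterexample. Take $\alpha=0.2$, $f(x)=c(1+|x|)^{-5.5}$ and $m=[\sqrt n+0.5]$.
\begin{itemize}
\item This $f$ is bounded and continuous with finite variance, which is the setting of Vasicek's theorem.
\item $\int f^{\alpha'}<\infty$ for every $\alpha'>2/11$, so your extra condition holds as well.
\item Yet $X_{(n)}\asymp n^{2/9}$ while $X_{(n-m)}=O_p(n^{1/9})$, so this block is of order $n^{-1/2}\bigl(n^{1/2}n^{2/9}\bigr)^{0.8}=n^{7/90}\to\infty$.
\end{itemize}
Every summand is nonnegative, so $T_{0.2}V\to+\infty$ even though $T_{0.2}(X)$ is finite.

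Two of your claims therefore fail on the two end cells. The first is the claim in Step 2 that the boundary indices contribute $o(1)$ ``provided the summands are uniformly integrable''. The second is the claim in Step 3 that $f(\xi_i)/f(X_{(i)})\to1$ uniformly within each cell; already for $N(0,1)$ and Exp$(1)$ that ratio tends to $0$ in the extreme windows. No condition of the form $\int f^{\alpha'}<\infty$ can repair this, because the defect is driven by $X_{(n)}$ and by how fast $m$ grows. Separately, Cauchy--Schwarz against $\frac1n\sum f(\xi_i)^{2(\alpha-1)}$ would need $\int f^{2\alpha-1}<\infty$, which fails for the normal at $\alpha=1/2$. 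H\"older with an exponent close to $1$ is what matches your ``slightly beyond $\alpha$'' condition.

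What is missing is a separate tail lemma: for every $\varepsilon>0$,
\[
\lim_{\delta\to0}\limsup_{n\to\infty}P\Bigl(\frac1n\sum_{i\le\delta n\ \mathrm{or}\ i>(1-\delta)n}\Bigl(\frac{n}{2m}\bigl(X_{(i+m)}-X_{(i-m)}\bigr)\Bigr)^{1-\alpha}>\varepsilon\Bigr)=0.
\]
Once you have this, your cell-by-cell localization only has to run on $[\delta,1-\delta]$.

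The lemma needs hypotheses tying the tail of $F$ to the growth of $m$. The block above shows that $(m/n)^{\alpha}(X_{(n)}-X_{(n-m)})^{1-\alpha}\to0$ in probability is at least needed. Two clean sufficient cases are:
\begin{itemize}
\item $\alpha<1$ and support of finite length $L$. Use Jensen's inequality for the concave map $t\mapsto t^{1-\alpha}$, together with the fact that each elementary spacing lies in at most $2m$ windows. This bounds each end block deterministically by $\delta^{\alpha}L^{1-\alpha}$.
\item $\alpha>1$ and $f$ bounded. Then $X_{(i+m)}-X_{(i-m)}\ge (U_{(i+m)}-U_{(i-m)})/\|f\|_\infty$, so each summand is dominated by $\|f\|_\infty^{\alpha-1}\bigl(\frac{n}{2m}(U_{(i+m)}-U_{(i-m)})\bigr)^{1-\alpha}$. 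These have uniformly bounded expectations once $m>\alpha-1$, and your Beta-moment argument then goes through.
\end{itemize}

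Three further points follow.
\begin{itemize}
\item The same example shows that the paper's statement, which carries no hypotheses at all, is false as written for $\alpha<1$.
\item $T_\alpha E$ and $T_\alpha W$ differ from $T_\alpha V$ only in the boundary summands, so your reduction of them to $T_\alpha V$ rests on exactly this missing lemma.
\item $T_\alpha H$ is built from forward spacings $X_{(i+m)}-X_{(i)}$, whose uniform counterparts are $\mathrm{Beta}(m,n-m+1)$. It therefore needs its own parallel run of the argument, not a comparison with $T_\alpha V$ at the same $m$.
\end{itemize}
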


\section{Estimation of Tsallis divergence}\label{S3} Suppose $F$ is the cdf of $X$ and $F_{\theta}$ is a parametric distribution function under consideration. Let $f$ and $f_{\theta}$ be the corresponding densities. Then, from (\ref{e6}), Tsallis divergence between $f$ and $f_{\theta}$ can be expressed as

\begin{eqnarray}\label{e17}
T_{\alpha}(F;F_{\theta})=\frac{1}{\alpha-1}\left(\int_{-\infty}^{+\infty} f^{\alpha-1}(x)f^{1-\alpha}_{\theta}(x)dF(x)-1\right). 
\end{eqnarray}
We now propose an estimator for $T_{\alpha}(F;F_{\theta})$ along the lines of Vasicek (1976). Note that Vasicek estimator for $f_n(X_{(i)})$ is $$\hat{f}_n(X_{(i)})=\frac{2m/n}{X_{(i+m)}-X_{(i-m)}}.$$
On using $\hat{f}_n(X_{(i)})$, we can define an estimator for $T_{\alpha}(F;F_{\theta})$ as

\begin{eqnarray}\label{e18}
\widehat{T}_{\alpha}(F;F_{\theta})&=&\frac{1}{\alpha-1}\left(\int_{-\infty}^{+\infty}f_n^{\alpha-1}(x)f_{\theta}^{1-\alpha}(x)dF_n(x)-1\right)\nonumber\\ 
&=&\frac{1}{\alpha-1}\left(\frac{1}{n}\sum_{i=1}^{n}f_n^{\alpha-1}(x)f_{\theta}^{1-\alpha}(x)-1\right)\nonumber\\
&=&\frac{1}{\alpha-1}\left(\frac{1}{n}\sum_{i=1}^{n}\left(\frac{2m}{n(X_{(i+m)}-X_{(i-m)})}\right)^{\alpha-1} f_{\theta}(X_{(i)})^{1-\alpha}-1\right).
\end{eqnarray}

\noindent We will estimate $f_{\theta}(X_{(i)})$ by 

\begin{equation}\label{e19}
f_{\hat{\theta}}(X_{(i)}=\frac{F_{\hat{\theta}}(X_{(i+m)})-F_{\hat{\theta}}(X_{(i-m)})}{X_{(i+m)}-X_{(i-m)}}.
\end{equation}

\noindent Now, from (\ref{e18}) and (\ref{e19}), we finally obtain an estimator for Tsallis divergence to be

\begin{equation}\label{e20}
\widehat{T}_{\alpha}(F,F_{\hat{\theta}})=\frac{1}{\alpha-1}\left(\frac{1}{n}\sum_{i=1}^{n}\left(\frac{2m}{n(F_{\widehat{\theta}}(X_{(i+m)})-F_{\widehat{\theta}}(X_{(i-m)}))} \right)^{\alpha-1}-1\right).
\end{equation}

\noindent We can use $\widehat{T}_{\alpha}(F,F_{\hat{\theta}})$ for model fitting purposes.\\

\noindent Suppose $X_1,\cdots,X_n$ is a random sample from a distribution with cdf $F$. Further, suppose we want to test the hypothesis
\begin{eqnarray}
H_0:F(x)\sim F_{\theta}(x)\;\;\;\;vs.\;\;\;\;H_1:F(x)\not\sim F_{\theta}(x),\nonumber
\end{eqnarray}
 where $F_{\theta}(x)$ is a parametric distribution function whose functional form is known except for the parameter $\theta$. To address this problem, first we will calculate $\widehat{T}_{\alpha}(F,F_{\hat{\theta}})$ using the order statistics and $F_{\widehat{\theta}}(X_{(i)})$ for some choice of $m$ and $\alpha$. Then, larger values of $\widehat{T}_{\alpha}(F,F_{\hat{\theta}})$ will indicate dissimilarity between $F$ and $F_{\theta}$. We will then reject the null hypothesis if $\widehat{T}_{\alpha}(F,F_{\hat{\theta}})>C$, where the critical value C will be determined from the empirical distribution of $\widehat{T}_{\alpha}(F,F_{\hat{\theta}})$.\\
 
 \noindent The following lemma shows that $\widehat{T}_{\alpha}(F,F_{\hat{\theta}})$ is distribution-free.
 
 \begin{lem}
 	The statistic $\widehat{T}_{\alpha}(F,F_{\hat{\theta}})$ is independent of $F_{\hat{\theta}}$.
 \end{lem}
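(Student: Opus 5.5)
The statement should be read as saying that, under $H_0$, the null distribution of $\widehat{T}_{\alpha}(F,F_{\hat{\theta}})$ does not depend on the particular member $F_{\theta}$ of the family, and in particular not on the unknown $\theta$. My plan is to exploit the fact that the statistic in (\ref{e20}) depends on the data only through the quantities
$$U_{(i)}=F_{\hat{\theta}}(X_{(i)}),\qquad i=1,\dots,n,$$
with the convention $U_{(i)}=U_{(1)}$ for $i<1$ and $U_{(i)}=U_{(n)}$ for $i>n$. This follows because $F_{\hat{\theta}}$ is nondecreasing, so it maps order statistics to order statistics and respects the boundary conventions. Then
$$\widehat{T}_{\alpha}(F,F_{\hat{\theta}})=\frac{1}{\alpha-1}\left(\frac{1}{n}\sum_{i=1}^{n}\left(\frac{2m}{n(U_{(i+m)}-U_{(i-m)})}\right)^{\alpha-1}-1\right),$$
which is a fixed function of the vector $(U_{(1)},\dots,U_{(n)})$. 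It therefore suffices to show that the joint law of this vector under $H_0$ is free of $\theta$.

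The first step handles a fully specified null. If $\theta$ were known, the probability integral transform gives that $F_{\theta}(X_1),\dots,F_{\theta}(X_n)$ are i.i.d. $U(0,1)$ whenever $X_i\sim F_{\theta}$ is continuous. Hence the $U_{(i)}$ are uniform order statistics, and the statistic's law depends only on $n$, $m$ and $\alpha$.

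The second step, which I expect to be the main obstacle, treats the case where $\theta$ is estimated. Here I would restrict to the location–scale families actually used later: normal with $\theta=(\mu,\sigma)$, and exponential with scale $\lambda$. I would take $\hat{\theta}$ to be equivariant, namely the sample mean and standard deviation, or the MLE of the scale. Writing $X_i=\mu+\sigma Z_i$ with $Z_i\sim F_{(0,1)}$, equivariance gives $(X_{(i)}-\hat{\mu})/\hat{\sigma}=(Z_{(i)}-\hat{\mu}_Z)/\hat{\sigma}_Z$. Since $F_{\hat{\theta}}(x)=F_0((x-\hat{\mu})/\hat{\sigma})$, we obtain $U_{(i)}=F_0\big((Z_{(i)}-\hat{\mu}_Z)/\hat{\sigma}_Z\big)$. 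This is a function of the standard sample $Z_1,\dots,Z_n$ alone, whose law does not involve $(\mu,\sigma)$. The exponential case is identical with $X_i=\lambda Z_i$.

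Finally, I would conclude that the critical value $C$ can be obtained by simulating from any single member of the family, for example $F_{(0,1)}$. The points needing care are the equivariance assumption on the estimator, which I would state explicitly as a hypothesis, and the monotonicity used to pass from $X_{(i)}$ to $U_{(i)}$. Continuity of $F_{\theta}$ ensures that ties occur with probability zero, so the differences $U_{(i+m)}-U_{(i-m)}$ are almost surely positive and the statistic is well defined.
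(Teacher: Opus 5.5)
Your proof is correct, and it is more complete than the paper's.

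\textbf{What the paper does.} The paper's argument consists only of your first step. It notes that $F_{\hat\theta}(X)$ is $U(0,1)$ when $X$ has cdf $F_{\hat\theta}$, and concludes that the $F_{\hat\theta}(X_{(i)})$ ``can be treated as'' uniform order statistics. In doing so it treats $F_{\hat\theta}$ as a fixed, known cdf and never addresses that $\hat\theta$ depends on the sample. That reasoning is valid for any continuous family when $\theta$ is specified. It does not show the null law is free of $\theta$ once $\hat\theta$ is plugged in, because $F_{\hat\theta}(X_1),\dots,F_{\hat\theta}(X_n)$ are then neither independent nor uniform.

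\textbf{What your second step adds.} The equivariance argument, $U_{(i)}=F_0\bigl((Z_{(i)}-\hat\mu_Z)/\hat\sigma_Z\bigr)$, supplies exactly what is missing for the cases used in Section 8: normal with MLEs, and exponential with $\hat\lambda=1/\bar X$. It is also what justifies computing critical values by simulating from $N(0,1)$ and $\mathrm{Exp}(1)$.

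\textbf{Trade-off.} The price is generality. Your version needs a location--scale structure and equivariant estimators, which you rightly state as hypotheses. For a shape family with an estimated shape parameter, the conclusion need not hold.

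\textbf{Minor slip.} In the paper $\mathrm{Exp}(\lambda)$ has rate $\lambda$, so the representation should be $X_i=Z_i/\lambda$, giving $\hat\lambda X_{(i)}=Z_{(i)}/\bar Z$. Your argument is unchanged.
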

 \begin{proof}
 	Let $X_1,\cdots,X_n$ be a random sample from a continuous distribution having cdf $F_{\hat{\theta}}$. Then, $Z=F_{\hat{\theta}}(X)$ has standard uniform distribution, denoted by U(0,1), with cdf $F_Z(x)=x,\;0<x<1$, which is independent of $\theta$. So, $F_{\hat{\theta}}(X_{(i)}),\;i=1,2,\cdots,n,$ can be treated as order statistics from U(0,1) distribution. Here, the distribution of the statistic $\widehat{T}_{\alpha}(F,F_{\hat{\theta}})$ is independent of $F_{\hat{\theta}}$, as required.
 \end{proof}
 
 \begin{table}[h!]
 	\centering
 	\caption{Bias and MSE of $T_{\alpha}V$, $T_{\alpha}H$, $T_{\alpha}E$ and $T_{\alpha}W$ for N(0,1) distribution.}\label{tab1}
 	\begin{tabular}{c c c c c c c c c c c}
 		\hline
 		& & & \multicolumn{2}{c}{$T_{\alpha}V$}&\multicolumn{2}{c}{$T_{\alpha}H$} & \multicolumn{2}{c}{$T_{\alpha}E$} & \multicolumn{2}{c}{$T_{\alpha}A$}\\
 		\hline
 		$n$ & $m$ & $\alpha$ & Bias & MSE & Bias & MSE & Bias & MSE & Bias & MSE\\
 		\hline
 		20 & 4 & 0.5 & -0.9655 & 1.0257 & -0.8691 & \bf 0.8503 & -1.4840 & 2.2800 & -0.9795 & 1.0753\\
 		& & 1 & -0.3317 & 0.1425 & -0.3130 & 0.1326 & -0.1716 & 0.0621 & -0.0521 & \bf 0.0357\\
 		& & 1.5 & -0.1512 & \bf 0.0355 & -0.1640 & 0.0423 & 0.5228 & 0.2760 & 0.4479 & 0.2043\\
 		& & 2 & -0.0831 & \bf 0.0125 & -0.1084 & 0.0222 & 0.1701 & 0.0296 & 0.1547 & 0.0248\\
 		& & 2.5 & -0.0510 & 0.0064 & -0.0818 & 0.0160 & 0.0717 & 0.0053 & 0.0668 & \bf 0.0047\\
 		& & 3 & -0.0355 & 0.0056 & -0.0684 & 0.0212 & 0.0329 & 0.0012 & 0.0314 & \bf 0.0011\\[1ex]
 		
 		& 6 & 0.5 & -1.0679 & 1.2327 & -0.9110 & 0.9261 & -1.4271 & 2.1136 & -0.6370 & \bf 0.5332\\
 		& & 1 & -0.3761 & 0.1728 & -0.3043 & 0.1258 & -0.1577 & 0.0574 & 0.0370 & \bf 0.0336\\ 
 		& & 1.5 & -0.1732 & 0.0427 & -0.1424 & \bf 0.0341 & 0.4757 & 0.2294 & 0.3448 & 0.1236\\
 		& & 2 & -0.0915 & 0.0137 & -0.0800 & \bf 0.0130 & 0.1582 & 0.0257 & 0.1301 & 0.0179\\
 		& & 2.5 & -0.0536 & 0.0056 & -0.0507 & 0.0062 & 0.0680 & 0.0048 & 0.0597 & \bf 0.0038\\
 		& & 3 & -0.0341 & 0.0028 & -0.0350 & 0.0043 & 0.0320 & 0.0011 & 0.0293 & \bf0.0009\\ [1ex]\hline
 		
 		50 & 7 & 0.5 & -0.6600 & 0.4813 & -0.7673 & \bf 0.6266 & -1.3835 & 1.9506 & -0.8941 & 0.8546\\
 		& & 1 & -0.1657 & 0.0393 & -0.2447 & 0.0719 & -0.0648 & 0.0163 & 0.0292 & \bf 0.0126\\
 		& & 1.5 & -0.0588 & \bf 0.0070 & -0.1146 & 0.0179 & 0.6026 & 0.3637 & 0.5541 & 0.3078\\
 		& & 2 & -0.0266 & \bf 0.0020 & -0.0661 & 0.0066 & 0.1974 & 0.0390 & 0.1896 & 0.0360\\
 		& & 2.5 & -0.0143 & \bf 0.0007 & -0.0426 & 0.0031 & 0.0830 & 0.0069 & 0.0814 & 0.0066\\
 		& & 3 & -0.0088 & \bf 0.0003 & -0.0306 & 0.0017 & 0.0385 & 0.0015 & 0.0381 & 0.0014\\[1ex]
 		
 		 & 16 & 0.5 & -0.8427 & 0.7560 & -0.8938 & 0.8364 & -1.2394 & 1.5772 & -0.2442 & \bf 0.1332\\
 		& & 1 & -0.2383 & 0.0707 & -0.2778 & 0.0913 & -0.0268 & \bf 0.0148 & 0.2069 & 0.0570\\
 		& & 1.5 & -0.0829 & \bf 0.0112 & -0.1137 & 0.0174 & 0.5284 & 0.2803 & 0.3869 & 0.1515\\
 		& & 2 & -0.0325 & \bf 0.0025 & -0.0537 & 0.0046 & 0.1810 & 0.0329 & 0.1540 & 0.0240\\
 		& & 2.5 & -0.0143 & \bf 0.0007 & -0.0277 & 0.0015 & 0.0787 & 0.0062 & 0.0717 & 0.0052\\
 		& & 3 & -0.0065 & \bf 0.0002 & -0.0158 & 0.0006 & 0.0374 & 0.0014 & 0.0353 & 0.0012\\
 		[1ex]\hline
 		
 		100 & 10 & 0.5 & -0.4840 & \bf 0.2614 & -0.6792 & 0.4809 & -1.3323 & 1.7959 & -0.8997 & 0.8425\\
 		& & 1 & -0.0936 & 0.0145 & -0.1995 & 0.0454 & -0.0258 & \bf 0.0064 & -0.1996 & 0.0455\\
 		& & 1.5 & -0.0269 & \bf 0.0023 & -0.0876 & 0.0097 & 0.6327 & 0.4006 & 0.6030 & 0.3639\\
 		& & 2 & -0.0113 & \bf 0.0007 & -0.0482 & 0.0032 & 0.2051 & 0.0421 & 0.2011 & 0.0405\\
 		& & 2.5 & -0.0068 & \bf 0.0003 & -0.0300 & 0.0013 & 0.0856 & 0.0073 & 0.0848 & 0.0072\\
 		& & 3 & -0.0048 & \bf 0.0001 & -0.0201 & 0.0006 & 0.0395 & 0.0016 & 0.0393 & 0.0015\\[1ex]
 		
 		& 33 & 0.5 & -0.7049 & 0.5286 & -0.8957 & 0.8207 & -1.1116 & 1.2641 & -0.0037 & \bf 0.0052\\
 		& & 1 & -0.1600 & 0.0340 & -0.2716 & 0.0800 & 0.0523 & \bf 0.0112 & -0.2710 & 0.0795\\
 		& & 1.5 & -0.0395 & \bf 0.0039 & -0.1058 & 0.0134 & 0.5509 & 0.3040 & 0.4082 & 0.1678\\
 		& & 2 & -0.0077 & \bf 0.0007 & -0.0468 & 0.0030 & 0.1902 & 0.0363 & 0.1644 & 0.0272\\
 		& & 2.5 & 0.0009 & \bf 0.0002 & -0.0223 & 0.0008 & 0.0825 & 0.0068 & 0.0761 & 0.0058\\
 		& & 3 & 0.0023 & \bf 0.00007 & -0.0114 & 0.0002 & 0.0390 & 0.0015 & 0.0372 & 0.0014\\\hline
 	\end{tabular}
 \end{table}
 
\begin{table}[h!]
	\centering
	\caption{Bias and MSE of $T_{\alpha}V$, $T_{\alpha}H$, $T_{\alpha}E$ and $T_{\alpha}W$ for Exp(1) distribution.}\label{tab2}
	\begin{tabular}{c c c c c c c c c c c}
		\hline
		& & & \multicolumn{2}{c}{$T_{\alpha}V$}&\multicolumn{2}{c}{$T_{\alpha}H$} & \multicolumn{2}{c}{$T_{\alpha}E$} & \multicolumn{2}{c}{$T_{\alpha}A$}\\
		\hline
		$n$ & $m$ & $\alpha$ & Bias & MSE & Bias & MSE & Bias & MSE & Bias & MSE\\
		\hline
		20 & 4 & 0.5 & -0.8323 & 0.8662 & -0.7790 & \bf 0.7622 & -1.3041 & 1.8440 & -0.8516 & 0.9307\\
		& & 1 & -0.2585 & 0.1249 & -0.2164 & 0.1081 & -0.1017 & 0.0707 & 0.0219 & \bf 0.0602\\
		& & 1.5 & -0.1785 & 0.0697 & -0.1306 & \bf 0.0578 & 0.6671 & 0.4536 & 0.5635 & 0.3293\\
		& & 2 & -0.1863 & 0.0835 & -0.1236 & \bf 0.0600 & 0.2712 & 0.0811 & 0.2380 & 0.0656\\
		& & 2.5 & -0.2334 & 0.1648 & -0.1503 & 0.0978 & 0.1398 & \bf 0.0296 & 0.1211 & 0.0303\\
		& & 3 & -0.3297 & 0.8022 & -0.1976 & 0.3068 & 0.0652 & \bf 0.0323 & 0.0540 & 0.0450\\[1ex]
		
		& 6 & 0.5 & -0.8964 & 0.9766 & -0.7996 & 0.7946 & -1.2261 & 1.6485 & -0.5142 & \bf 0.5122\\
		& & 1 & -0.2588 & 0.1262 & -0.1837 & 0.0960 & -0.0434 & \bf 0.0627 & 0.1472 & 0.0830\\ 
		& & 1.5 & -0.1566 & 0.0589 & -0.0781 & \bf 0.0423 & 0.6324 & 0.4082 & 0.4665 & 0.2303\\
		& & 2 & -0.1536 & 0.0575 & -0.0537 & \bf 0.0323 & 0.2723 & 0.0787 & 0.2212 & 0.0560\\
		& & 2.5 & -0.1723 & 0.0802 & -0.0513 & 0.0319 & 0.1506 & 0.0275 & 0.1267 & \bf 0.0218\\
		& & 3 & -0.2292 & 0.1990 & -0.0634 & 0.0433 & 0.0881 & 0.0150 & 0.0748 & \bf 0.0142\\[1ex]\hline
		
		50 & 7 & 0.5 & -0.5680 & \bf 0.4127 & -0.6854 & 0.5352 & -1.2225 & 1.5646 & -0.7755 & 0.7158\\
		& & 1 & -0.1287 & 0.0384 & -0.1631 & 0.0496 & -0.0320 & \bf 0.0237 & 0.0686 & 0.0264\\
		& & 1.5 & -0.0862 & \bf 0.0197 & -0.0825 & 0.0202 & 0.7577 & 0.5763 & 0.6800 & 0.4655\\
		& & 2 & -0.0953 & 0.0212 & -0.0671 & \bf 0.0152 & 0.3216 & 0.1047 & 0.2958 & 0.0894\\
		& & 2.5 & -0.1151 & 0.0297 & -0.0694 & \bf 0.0170 & 0.1804 & 0.0336 & 0.1666 & 0.0294\\
		& & 3 & -0.1457 & 0.0613 & -0.0732 & 0.0193 & 0.1137 & 0.0341 & 0.1051 & \bf 0.0127\\[1ex]
		
		 & 16 & 0.5 & -0.6657 & 0.5473 & -0.7800 & 0.6700 & -1.0284 & 1.1480 & -0.1103 & \bf 0.1778\\
		& & 1 & -0.1281 & 0.0430 & -0.1490 & 0.0459 & 0.0808 & \bf 0.0333 & 0.3170 & 0.1275\\
		& & 1.5 & -0.0640 & 0.0150 & -0.0320 & \bf 0.0135 & 0.6904 & 0.4782 & 0.5064 & 0.2608\\
		& & 2 & -0.0680 & 0.0133 & 0.0008 & \bf 0.0076 & 0.3073 & 0.0955 & 0.2544 & 0.0665\\
		& & 2.5 & -0.0822 & 0.0160 & 0.0061 & \bf 0.0053 & 0.1782 & 0.0324 & 0.1561 & 0.0254\\
		& & 3 & -0.0989 & 0.0215 & 0.0055 & \bf 0.0044 & 0.1166 & 0.0140 & 0.1048 & 0.0116\\[1ex]\hline
		
		100 & 10 & 0.5 & -0.4164 & \bf 0.2274 & -0.6159 & 0.4132 & -1.1790 & 1.4322 & -0.7855 & 0.6818\\
		& & 1 & -0.0764 & 0.0169 & -0.1337 & 0.0287 & -0.0065 & \bf 0.0107 & 0.0626 & 0.0148\\
		& & 1.5 & -0.0554 & \bf 0.0089 & -0.0622 & 0.0100 & 0.7941 & 0.6316 & 0.7400 & 0.5488\\
		& & 2 & -0.0652 & 0.0093 & -0.0462 & \bf 0.0068 & 0.3397 & 0.1159 & 0.3215 & 0.1040\\
		& & 2.5 & -0.0762 & 0.0118 & -0.0461 & \bf 0.0066 & 0.1942 & 0.0380 & 0.1840 & 0.0343\\
		& & 3 & -0.0930 & 0.0178 & -0.0488 & \bf 0.0070 & 0.1254 & 0.0160 & 0.1190 & 0.0146\\[1ex]
		
		& 33 & 0.5 & -0.5086 & 0.3359 & -0.7790 & 0.6377 & -0.8878 & 0.8571 & 0.1364 & \bf 0.1479\\
		& & 1 & -0.0594 & \bf 0.0188 & -0.1374 & 0.0306 & 0.1526 & 0.0388 & 0.4005 & 0.1756\\
		& & 1.5 & -0.0254 & \bf 0.0062 & -0.0177 & \bf 0.0062 & 0.7109 & 0.5066 & 0.5237 & 0.2765\\
		& & 2 & -0.0399 & 0.0053 & 0.0137 & \bf 0.0037 & 0.3173 & 0.1012 & 0.2650 & 0.0710\\
		& & 2.5 & -0.0563 & 0.0067 & 0.0212 & \bf 0.0026 & 0.1853 & 0.0346 & 0.1629 & 0.0269\\
		& & 3 & -0.0749 & 0.0100 & 0.0218 & \bf 0.0019 & 0.1220 & 0.0150 & 0.1110 & 0.0126\\\hline
	\end{tabular}
\end{table}

\section{Comparison of estimators}\label{S4} In this section, we compare the performances of the four proposed estimators $T_{\alpha}V$, $T_{\alpha}H$, $T_{\alpha}E$ and $T_{\alpha}W$ by means of Monte Carlo simulations. For this purpose, we generate 10000 random samples from standard normal and standard exponential distributions and calculate the bias and the mean squared error (MSE) of the proposed estimators. Normal distribution with mean $\mu$ and variance $\sigma^2$, denoted by $N(\mu,\sigma^2)$, has the density $$f_N(x;\mu,\sigma)=\frac{1}{\sqrt{2\pi}\sigma}e^{-\frac{1}{2\sigma^2}(x-\mu)^2},\;-\infty<x<+\infty,-\infty<\mu<+\infty,\sigma>0.$$ and exponential distribution with mean $\frac{1}{\lambda}$, denoted by Exp($\lambda$), has the density $$f_E(x;\lambda)=\lambda e^{-\lambda x};\;x>0,\lambda>0.$$ The true value of $T_{\alpha}(X)$ for Normal and Exponential distributions are $\frac{1}{\alpha-1}\left(1-\frac{1}{(2\pi\sigma^2)^{\frac{\alpha-1}{2}}\alpha^{\frac{1}{2}}}\right)$ and $\frac{1}{\alpha-1}\left(1-\frac{\lambda^{\alpha-1}}{\alpha}\right)$, respectively. We calculate the bias and the MSE for the estimators for different choices of $n$, $m$ and $\alpha$. We take $n$ = 20, 50 and 100 and $\alpha$ = 0.5, 1, 1.5, 2, 2.5 and 3. For the choice of $m$, we choose it as $m=[\sqrt{n}+0.5]$, which has been used by Wieczorkowski and Grzegorzewski (1999). Another choice of $m$ is $m=\left[\frac{m}{3}\right]$. So, the choices for $m$ for $n$ = 20, 50 and 100 are (4,7,10) and (6,16,33), respectively. We have presented the obtained results in Tables \ref{tab1} and \ref{tab2}. From these tables, we observe that the performance of the estimators depend on the choice of $m$. As the sample size increases, bias and MSE of the estimators decrease, as expected. From Table \ref{tab1}, we observe that for N(0,1) distribution, $T_{\alpha}W$ outperforms the other estimators when $n$ = 20 for most of the cases. However, $T_{\alpha}V$ performs better than all the estimators in most of the cases for $n$ = 50 and 100. When $\alpha\to1$, i.e., for the KL divergence estimation, $T_{\alpha}W$ performs the best for small sample sizes and $T_{\alpha}E$ performs the best for large sample sizes. Also, in most of the cases, $T_{\alpha}W$ performs better than $T_{\alpha}E$ and that the estimators perform better for $m=\left[\frac{m}{3}\right]$ than for $m=[\sqrt{n}+0.5]$. In general, for N(0,1) distribution, Vasicek-type estimators are better suited with $m=\left[\frac{m}{3}\right]$. From Table \ref{tab2}, it is evident that for Exp(1) distribution, $T_{\alpha}H$ has better performance in most of the cases. However, when the sample size increases, performance of $T_{\alpha}V$ and $T_{\alpha}H$ become similar. Note that $T_{\alpha}E$ has better performance for Exp(1) distribution than N(0,1) distribution. Here, both choices of $m$ are seen to be satisfactory. For $\alpha>1$, $m=\left[\frac{m}{3}\right]$ has overall better performance and the reverse is true for $\alpha$ = 0.5 and 1. From these observations, we can conclude that for practical purposes, Vasicek-type estimator would be quite suitable for both normal and exponential distributions and for both choices of $m$.\\

\subsection{Robustness of the estimators in the presence of outliers} An estimator is said to be sensitive if it gets affected even when there is a small departure from the model assumption. An estimator is robust to outliers if it is stable when the data are subjected to some extreme observations. A non-robust estimator, such as sample mean, can be heavily influenced by outliers. Note that the estimator $T_{\alpha}H$ is similar to $T_{\alpha}V$, and $T_{\alpha}W$ is similar to $T_{\alpha}E$, with a slight modification to the weight function. Here, we study the robustness of $T_{\alpha}V$ and $T_{\alpha}E$. Robustness of $T_{\alpha}H$ and $T_{\alpha}W$ are similar to that of $T_{\alpha}V$ and $T_{\alpha}E$, respectively. The window size $m$ adjusts the smoothing of the estimators and it reduces the variability of the estimators. We discuss the robustness of $T_{\alpha}V$ and $T_{\alpha}E$ when the data are normally distributed, for some choices of $m$, by evaluating their empirical influence function (EIF).\\

\begin{figure}[H]
	\begin{subfigure}{0.55\textwidth}
		\centering
		\includegraphics[width=1\linewidth]{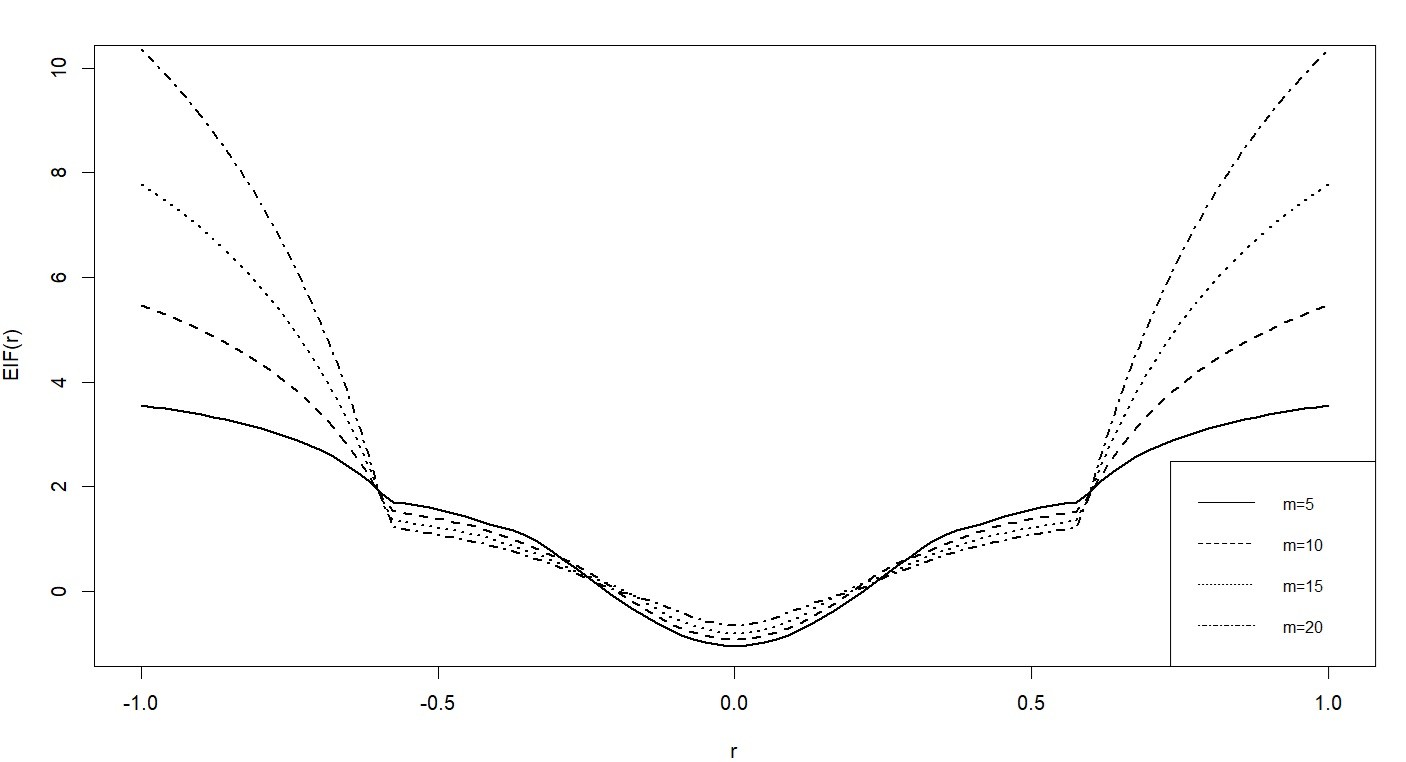}
		\caption{$\alpha$ = 2.}\label{f1.1}
	\end{subfigure}\hfill
	\begin{subfigure}{0.55\textwidth}
		\centering
		\includegraphics[width=1\linewidth]{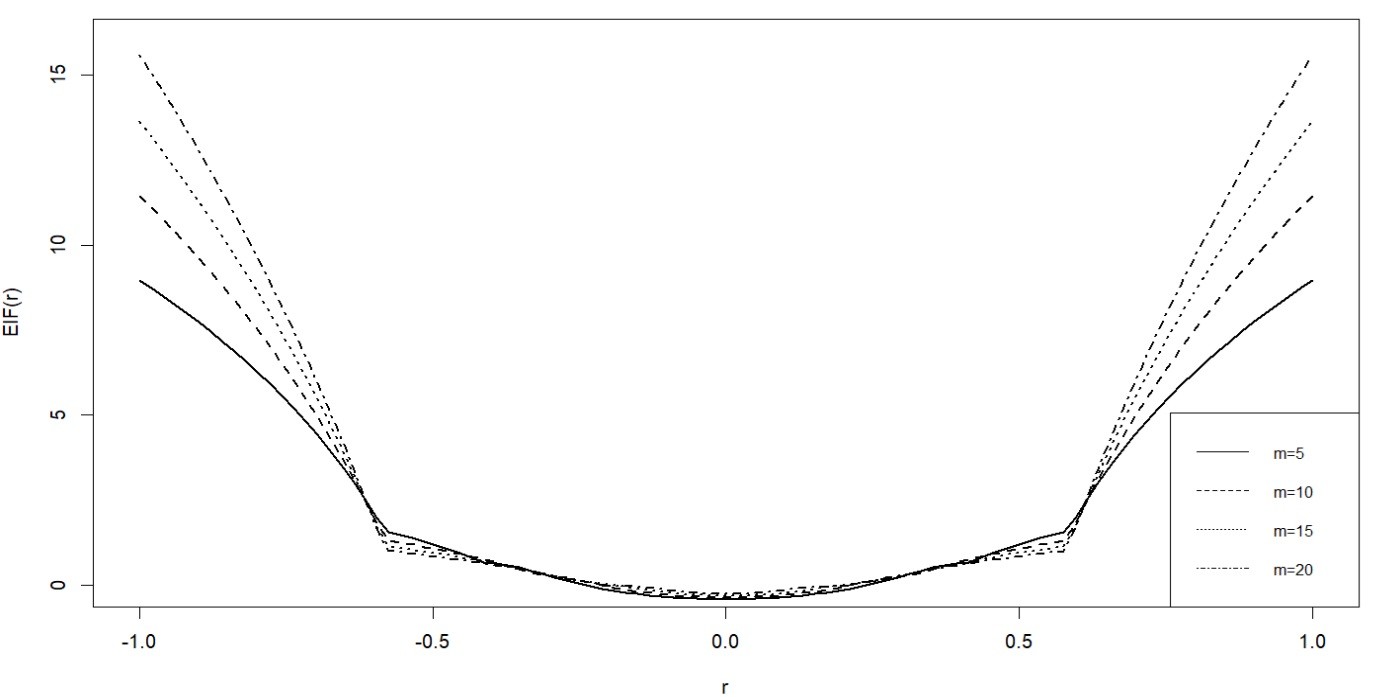}
		\caption{$\alpha$ = 0.9.}\label{f1.2}
	\end{subfigure}
	\caption{Empirical influence function for $T_{\alpha}V$.}
	\label{f1}
\end{figure}

The EIF (Hampel et al., 1986) is used to measure the changes to an estimator subject to outliers. Let $\textbf{X}=(X_1,\cdots,X_n)$ be a random sample and let $\hat{\xi}(\textbf{X})$ be an estimator for $\xi$ based on $\textbf{X}$. Suppose $r$ is a new observation added to the previous sample. EIF($r$) assesses the sensitivity of the estimator $\hat{\xi}$ to $r$ and is defined as

$$EIF(r)=(n+1)(\hat{\xi}(\textbf{X},r)-\hat{\xi}(\textbf{X})).$$

Lower values of EIF indicates more robustness. We calculate EIF for $T_{\alpha}V$ and $T_{\alpha}E$ estimators for $n$ = 100. We generate the samples as $X_i=\mu+\sigma\Phi^{-1}\left( \frac{i}{n+1}\right)$ (see Hampel et al., 1986), choose $r\in[-4\sigma,4\sigma]$ and then calculate the respective EIF values. Here, $\Phi$ is the cdf of the standard normal distribution and so $X_i\sim N(\mu,\sigma^2)$. We take $(\mu,\sigma)$ = (0,0.25) and $m$ = 5, 10, 15 and 20. The obtained results are displayed in Figures \ref{f1} and \ref{f2}, respectively. From Figure (\ref{f1.1}), we observe that for $T_{\alpha}V$ estimator when $\alpha$ = 2, lower values of $m$ perform better for small outliers, while for moderate outliers higher values perform better and for high outliers again lower values perform better. Similar results can be observed from Figure 1(a), for $\alpha$ = 0.9. However, in this case, the difference is negligible for low and moderate outliers. For high outliers, clearly lower choices of $m$ is more robust. For $T_{\alpha}E$ estimator, similar behaviour can be observed for $\alpha$ = 2 from Figure 1(b). For $\alpha$ = 0.9, higher values of $m$ perform better for small to moderate outliers and for higher outliers lower values of $m$ is better. Also, from Figures \ref{f1} and \ref{f2}, it is evident that in general $T_{\alpha}E$ is more robust than $T_{\alpha}V$ since $T_{\alpha}E$ has lower EIF values.\\

\begin{figure}
	\begin{subfigure}{0.55\textwidth}
		\centering
		\includegraphics[width=1\linewidth]{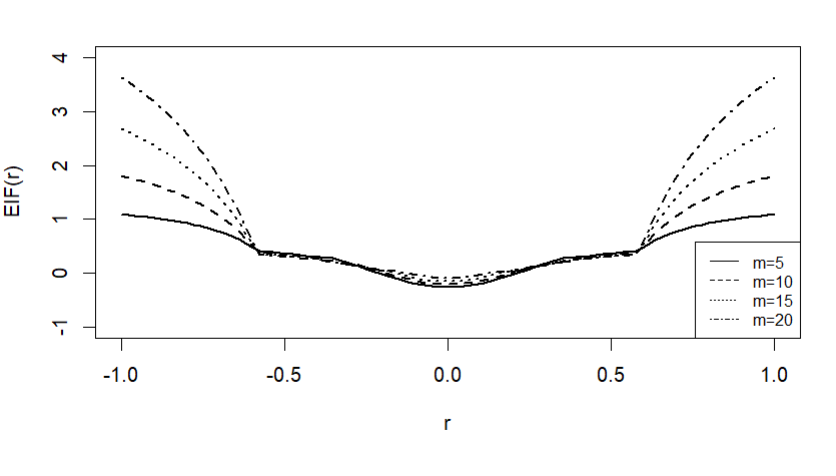}
		\caption{$\alpha$ = 2.}\label{f2.1}
	\end{subfigure}\hfill
	\begin{subfigure}{0.55\textwidth}
		\centering
		\includegraphics[width=1\linewidth]{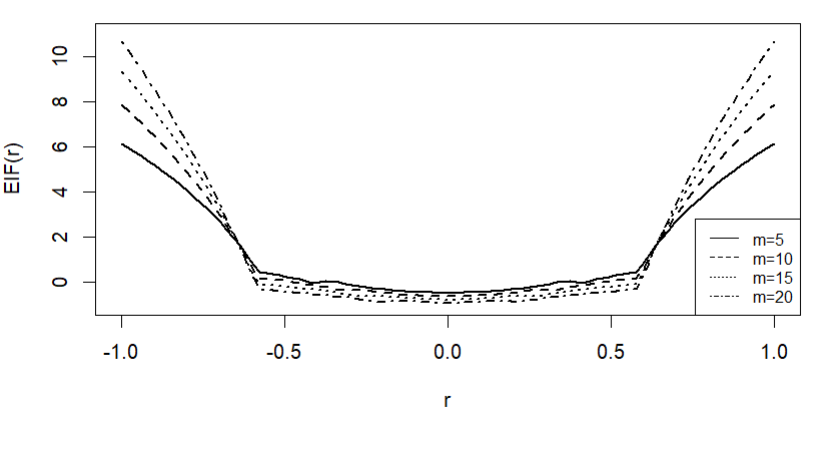}
		\caption{$\alpha$ = 0.9.}\label{f2.2}
	\end{subfigure}
	\caption{Empirical influence function for $T_{\alpha}E$.}
	\label{f2}
\end{figure}

\section{Estimation of Tsallis entropy based on progressive type-II censored data using $m$-spacings}\label{S5}
In this section, we discuss estimation of Tsallis entropy when the data are progressively type-II censored. For this purpose, we consider a Vasicek-type estimator.\\

In reliability and survival analysis, censored life-tests are generally performed as it is not always feasible to conduct experiments till all the units fail. The type-I and type-II censoring are the most common censoring schemes. Suppose $n$ units are placed on a life-testing experiment. In type-I censoring the experiment is continued till a pre-fixed time $T$ while in type-II censoring, the experiment is terminated when a pre-fixed number of failures, say $r(<n)$, is achieved. These censoring schemes do not allow removal of surviving items during the experiment. Cohen (1963) addressed this problem by introducing progressive type-II (PC-II) censoring scheme which allows removal of units during the experiment. A PC-II censoring can be described as follows.\\

 Suppose $n$ identical units are placed on a test, each having a common lifetime distribution with pdf $f$ and cdf $F$. Also, suppose a pre-fixed number $r$ of failures is allowed. Let $R_1,\cdots, R_r$ be prefixed integers such that $R_1 +\cdots+R_r = n-r$. When the first failure occurs, $R_1$ of the remaining $n-1$ surviving units are randomly removed from the test. Then $R_2$ of the remaining $n-R_1-2$ units are randomly removed at the time of second failure. Finally, at the time of the $r$th failure, all the remaining $R_r = n-r-\sum_{i=1}^{r-1}R_i$ units are removed from the test. The failure times thus observed are denoted by $X_{1:r:n},\cdots,X_{r:r:n}$. PC-II censoring reduces to the conventional type-II censoring when $R_i = 0$ for $i=1, \cdots, r-1$, and $R_r=n-r$.\\

For PC-II censored data, we can estimate $\frac{d}{dw}\left(F^{-1}(w) \right)$ as 

\begin{eqnarray}
\frac{d}{dw}\left(F^{-1}(w) \right)&\approx&\frac{X_{i+m:r:n}-X_{i-m:r:n}}{F(X_{i+m:r:n})-F(X_{i-m:r:n})},\nonumber
\end{eqnarray}

\noindent where $m<\frac{r}{2}$, $X_{i:r:n}=X_{1:r:n}$ if $i<1$ and $X_{i:r:n}=X_{r:r:n}$ if $i>r$. So, a Vasicek-type estimator for Tsallis entropy can be defined as

\begin{eqnarray}\label{L1}
T_{\alpha}=\frac{1}{\alpha-1}\left( 1-\frac{1}{r}\sum_{i=1}^{r}\left( \frac{X_{i+m:r:n}-X_{i-m:r:n}}{F(X_{i+m:r:n})-F(X_{i-m:r:n})}\right)^{1-\alpha} \right).
\end{eqnarray}
Let $F(X_{i:r:n})=U_{i:r:n},\;i=1,2,\cdots,r$, be the progressively type-II censored order statistics from Uniform(0,1) distribution. Then,
$$E(U_{i:r:n})=1-\prod_{k=r-i+1}^{r}\beta_k,$$ where $\beta_i=\frac{i+\sum_{k=r-i+1}^{r}R_k}{1+i+\sum_{k=r-i+1}^{r}R_k}$, $\beta_k=\beta_1$ if $k\leq 1$ and $\beta_k=\beta_r$ if $k\geq r$. For detailed discussions on progressively censored ordered statistics, one may see the books by Balakrishnan and Aggarwala (2000) and Balakrishnan and Cramer (2014). We obtain the estimator for Tsallis entropy, by replacing $U_{i:r:n}$ by its expectation in (\ref{L1}), as

\begin{eqnarray}\label{L2}
\hat{T_{\alpha}}=\frac{1}{\alpha-1}\left( 1-\frac{1}{r}\sum_{i=1}^{r}\left( \frac{X_{i+m:r:n}-X_{i-m:r:n}}{\prod_{k=r-(i-m)+1}^{r}\beta_k-\prod_{k=r-(i+m)+1}^{r}\beta_k}\right)^{1-\alpha} \right).
\end{eqnarray}

\begin{p}
	The estimator $\hat{T_{\alpha}}$ is consistent when $r\to n$, $n\to \infty$, $m\to \infty$ and $m/n\to 0$.
\end{p}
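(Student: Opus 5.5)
The plan is to reduce the censored estimator to the complete-sample Vasicek-type estimator $T_{\alpha}V$ in (\ref{e10}) and then invoke the consistency theorem of Section \ref{S2}. When $r\to n$ the number of units removed, $\sum_{k}R_k=n-r$, is negligible. The progressively censored order statistics $X_{i:r:n}$ then differ from the usual order statistics $X_{(i)}$ only on a vanishing fraction of indices. In the extreme case $r=n$ all $R_k=0$, and $X_{i:n:n}=X_{(i)}$ exactly.

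\textbf{Step 1: the denominators.} With $R_k=0$ we get $\beta_k=k/(k+1)$, and the product telescopes: $\prod_{k=n-j+1}^{n}\beta_k=(n-j)/n$. So the denominator in (\ref{L2}) equals $\bigl((i+m)-(i-m)\bigr)/n=2m/n$ for interior $i$, with truncated windows near the ends exactly as in (\ref{e10}). Hence at $r=n$, $\hat{T_{\alpha}}$ coincides with $T_{\alpha}V$, up to boundary conventions affecting at most $2m$ of the $n$ summands.

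\textbf{Step 2: the general case $r\to n$.} I would write $\hat{T_{\alpha}}$ as $T_{\alpha}V$ plus a remainder. Two comparisons are needed.
\begin{enumerate}
\item The expectations $E(U_{i:r:n})=1-\prod\beta_k$ differ from $i/(n+1)$ by $O((n-r)/n)$ uniformly in $i$. Consequently the differences of these expectations over a window of width $2m$ equal $(2m/n)(1+o(1))$ whenever $(n-r)/m\to0$.
\item The observed spacings $X_{i+m:r:n}-X_{i-m:r:n}$ are, after the probability integral transform, differences of uniform progressively censored order statistics. Their concentration around these expectations follows from the standard representation of $U_{i:r:n}$ as a product of independent Beta variables (Balakrishnan and Aggarwala, 2000): the relative error of an $m$-window difference is $O_p(m^{-1/2})\to0$.
\end{enumerate}

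\textbf{Step 3: the sums.} Each summand in (\ref{L2}) is then $\bigl((F^{-1})'(w_i)\bigr)^{1-\alpha}(1+o_p(1))$, with $w_i$ the corresponding quantile level. The average over $i$ is therefore a Riemann sum converging in probability to $\int_0^1\bigl((F^{-1})'(w)\bigr)^{1-\alpha}dw$, which by (\ref{e9}) gives $T_{\alpha}(X)$. This is the same argument as in Vasicek (1976), which underlies the theorem in Section \ref{S2}.

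\textbf{Main obstacle.} The hard part is controlling the summands uniformly. Two sources of trouble must be handled:
\begin{itemize}
\item the boundary indices (the first and last $m$, and the indices near where censoring accumulates), where the windows are truncated;
\item indices where $(F^{-1})'$ blows up, since $x\mapsto x^{1-\alpha}$ is unbounded for either sign of $1-\alpha$.
\end{itemize}
I would first restrict to a compact interior range $w\in[\delta,1-\delta]$, where $(F^{-1})'$ is bounded away from $0$ and $\infty$ and the Step 2 approximations are uniform. The tails would then be handled by a uniform-integrability bound on $\frac1r\sum(\cdot)^{1-\alpha}$ over indices with $i/r\notin[\delta,1-\delta]$, letting $\delta\to0$ at the end. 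This bound requires assuming $\int f^{\alpha}<\infty$, plus a mild moment condition, as implicitly in the Section \ref{S2} theorem. The requirement $(n-r)/m\to0$ in Step 2 is how I would interpret ``$r\to n$'' in the statement.
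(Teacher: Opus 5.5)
Your Step 1 is, in substance, the paper's entire proof. The paper only observes that as $r\to n$ the PC-II sample becomes a complete sample (all $R_k=0$), so $\hat{T_{\alpha}}$ becomes a Vasicek-type estimator and consistency follows from Vasicek (1976). It reads ``$r\to n$'' as the sample eventually being complete, and does not attempt anything like your Steps 2--3 with $n-r\neq 0$.

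One correction in Step 1: with $\beta_k=k/(k+1)$ the product telescopes to $\prod_{k=n-j+1}^{n}\beta_k=(n-j+1)/(n+1)$, not $(n-j)/n$. So $E(U_{j:n:n})=j/(n+1)$, and the interior denominators are $2m/(n+1)$ (as in the paper's next proposition), not $2m/n$. Hence $\hat{T_{\alpha}}$ does not literally coincide with $T_{\alpha}V$. Its interior summands are those of $T_{\alpha}V$ multiplied by $\bigl((n+1)/n\bigr)^{1-\alpha}\to 1$, which is harmless.

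The real gap is in your extension, at the passage from summands to their average. For fixed $i$, the normalized window $W_i=(U_{i+m:r:n}-U_{i-m:r:n})/(E U_{i+m:r:n}-E U_{i-m:r:n})$ does have relative error $O_p(m^{-1/2})$. But this is not uniform over the $\asymp n$ interior indices. $\max_i|W_i-1|$ is of order $\sqrt{\log(n/m)/m}$ when that is small, and it does not tend to $0$ at all when $m=O(\log n)$; the hypotheses $m\to\infty$, $m/n\to0$ do not exclude this. So your claim that ``the Step 2 approximations are uniform'' on $[\delta,1-\delta]$ is false for approximation 2. Consequently ``each summand is $(\cdot)(1+o_p(1))$'' does not by itself yield the Riemann sum.

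Vasicek's argument relies on the additive split $\log(\Delta X/\Delta U)+\log(\Delta U/E\Delta U)$, and the power $x^{1-\alpha}$ destroys it. You therefore need an $L^1$ averaging argument instead:
\begin{enumerate}
\item Write each summand as $q(\eta_i)^{1-\alpha}W_i^{1-\alpha}$, with $\eta_i$ between $U_{i-m:r:n}$ and $U_{i+m:r:n}$, and set $w_i=E(U_{i:r:n})$.
\item On the interior, $\max_i|q(\eta_i)^{1-\alpha}-q(w_i)^{1-\alpha}|\to0$ in probability. This follows from Glivenko--Cantelli (using $X_{i:r:n}=X_{(J_i)}$ with $i\le J_i\le i+n-r$) together with continuity of $q$ there, and this part genuinely is uniform.
\item Also $\sup_i E|W_i^{1-\alpha}-1|\to0$ over interior $i$. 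This comes from the independent-Beta representation and uniformly bounded positive and negative moments once $m$ is large.
\item Hence $\frac1r\sum q(w_i)^{1-\alpha}(W_i^{1-\alpha}-1)\to0$ in $L^1$ and $\frac1r\sum W_i^{1-\alpha}=O_p(1)$. The deterministic Riemann sum $\frac1r\sum q(w_i)^{1-\alpha}$ then finishes the interior.
\end{enumerate}
Combined with the tail bound you already flag, this repairs the argument.
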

\begin{proof}
	When $r\to n$, the PC-II censored sample becomes complete sample and the required result then follows from Vasicek (1976).
\end{proof}

\begin{p}
	For type-II censoring with the first $r$ failures observed, the estimator reduces to $$\hat{T_{\alpha}}=\frac{1}{\alpha-1}\left( 1-\frac{1}{r}\sum_{i=1}^{r}\left( \frac{X_{i+m:r:n}-X_{i-m:r:n}}{2m/(n+1)}\right)^{1-\alpha} \right).$$
\end{p}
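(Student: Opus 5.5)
The plan is to specialize the weights $\beta_k$ to the type-II censoring scheme and show that the product terms in the denominator of (\ref{L2}) telescope to the familiar uniform order-statistic means $j/(n+1)$. The difference of two such means is then $2m/(n+1)$.

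First I will identify the censoring scheme. Conventional type-II censoring corresponds to $R_1=\cdots=R_{r-1}=0$ and $R_r=n-r$. For every $i\geq 1$ the sum $\sum_{k=r-i+1}^{r}R_k$ contains the term $R_r$ and no other nonzero term, so it equals $n-r$. Hence
$$\beta_i=\frac{i+n-r}{i+n-r+1}.$$

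Next I will compute the products. For a fixed index $j$ with $1\leq j\leq r$,
$$\prod_{k=r-j+1}^{r}\beta_k=\prod_{k=r-j+1}^{r}\frac{k+n-r}{k+n-r+1}=\frac{n-j+1}{n+1},$$
since the product telescopes: the numerator $n-j+1$ comes from $k=r-j+1$, and the denominator $n+1$ comes from $k=r$. This gives $E(U_{j:r:n})=1-\frac{n-j+1}{n+1}=\frac{j}{n+1}$. As a sanity check, this agrees with the classical mean of the $j$th uniform order statistic from a sample of size $n$, which is what type-II censored order statistics are.

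Substituting $j=i-m$ and $j=i+m$ into the denominator of (\ref{L2}) gives
$$\frac{n-i+m+1}{n+1}-\frac{n-i-m+1}{n+1}=\frac{2m}{n+1}.$$
Plugging this into (\ref{L2}) yields the displayed formula.

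The only delicate point is the range of indices. For $i\leq m$ or $i>r-m$ the index $i\pm m$ falls outside $\{1,\dots,r\}$. There the paper's conventions clamp the observations to $X_{1:r:n}$ or $X_{r:r:n}$ and set $\beta_k=\beta_1$ or $\beta_r$. I will argue that the estimator in (\ref{L2}) is understood with the same formal index $i\pm m$ in the product limits as in the spacing. The telescoping identity $\prod_{k=r-j+1}^{r}\beta_k=(n-j+1)/(n+1)$ is an algebraic identity in $j$ once $\beta_k$ is given by the closed form above, so the denominator is $2m/(n+1)$ uniformly in $i$. This mirrors the way the uniform weight $2m/n$ is used at the boundary in (\ref{e10}). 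I expect this boundary bookkeeping to be the main obstacle. If one insists on the clamping convention for $\beta_k$ instead, the identity holds exactly for the interior indices $m+1\leq i\leq r-m$. In that case I would state that the reduction is exact there and coincides with the convention adopted in the Vasicek-type estimator.
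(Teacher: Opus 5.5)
Your proposal is correct and follows the paper's own proof. Under $R_1=\cdots=R_{r-1}=0$, $R_r=n-r$ you get $\beta_k=\frac{k+n-r}{k+n-r+1}$. The product then telescopes to $\prod_{k=a}^{r}\beta_k=\frac{a+n-r}{n+1}$, which is your $\frac{n-j+1}{n+1}$ with $a=r-j+1$, and so the difference in the denominator of (\ref{L2}) is $\frac{2m}{n+1}$.

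Your boundary discussion goes beyond the paper, which ignores the clamped indices altogether. Of your two readings, the interior-index claim ($m+1\le i\le r-m$) is the safe one. The ``formal index'' extension needs a reversed-product convention when $i<m$. It also breaks down once $i+m\ge n+2$, because the product then contains the factor $\beta_{r-n-1}$, which has a zero denominator.
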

\begin{proof}
	The PC-II censoring becomes type-II censoring when $R_i = 0$ for $i=1, \cdots, r-1$, and $R_r=n-r$. In this case, $\beta_j=\frac{j+n-r}{j+1+n-r}$ for all $j$ and $\prod_{k=j}^{r}\beta_k=\frac{j+n-r}{n+1}$. Substituting this in (\ref{L2}), we get the required result.
\end{proof}

\subsection{Empirical evaluation} We perform Monte Carlo simulations to assess the performance of the proposed estimator by examining its mean and variance. We consider different progressive censoring schemes for $n$ = 20 and $r$ = 10 and 15. We choose window size $m=[\sqrt{r}+0.5]$ and $\alpha$ = 1.10, 1.5 and 2. We mostly consider one-step progressive censoring schemes where all the surviving units are removed at one time during the experiment. We simulate 5000 progressively type-II censored samples from each of Exp(1) and N(0,1) distributions and calculate the average estimate (AE) and the variance of $\hat{T_{\alpha}}$. The obtained results are presented in Tables \ref{TN1} and \ref{TN2}, respectively. In these tables, the notation $a*b$ refers to `$a$ is repeated $b$ times'. For instance, $(0*5,10,0*4)$ refers to $(0,0,0,0,0,10,0,0,0,0)$.

\begin{table}[h!]
	\centering
	\caption{Average Estimate and Variance of $\hat{T_{\alpha}}$ for Exp(1) distribution.}\label{TN1}
	\begin{tabular}{c c c c c c c c c}
		\hline
		& & & \multicolumn{2}{c}{$\alpha=1.10$}&\multicolumn{2}{c}{$\alpha=1.50$} & \multicolumn{2}{c}{$\alpha=2$}\\
		\hline
		$n$ & $r$ & Schemes & AE & Var & AE & Var & AE & Var\\
		\hline
		20 & 10 & ($10,0*9$) & -0.4850 & 0.1413 & -0.5991 & 0.2534 & -0.8415 & 0.6028\\
		& & ($0,10,0*8$) & -0.4772 & 0.1354 & -0.5915 & 0.2328 & -0.7865 & 0.5766\\
		& & ($0*4,10,0*5$) & -0.3658 & 0.1324 & -0.4783 & 0.2296 & -0.7081 & 0.5138\\
		& & ($0*5,10,0*4$) & -0.3246 & 0.1217 & -0.4302 & 0.2147 & -0.6118 & 0.3846\\
		& & ($0*8,10,0$) & -0.0850 & 0.1208 & -0.1413 & 0.1530 & -0.2716 & 0.2809\\
      	& & ($0*9,10$)   & \bf 0.2153 & \bf0.1188 & \bf0.1621 & \bf0.1189 & \bf0.0906 & \bf0.1392\\
    	& & ($5,0*8,5$) & -0.0578 & 0.1300 & -0.1215 & 0.1648 & -0.2237 & 0.3146\\
	    & & ($1*10$) & -0.2000 & 0.1300 & -0.2807 & 0.1683 & -0.4435 & 0.4890\\\hline
    	20 & 15 & ($5,0*14$) & 0.2097 & 0.0796 & 0.1420 & 0.0737 & 0.0762 & 0.0860\\
    	& & ($0,5,0*13$) & 0.2243 & 0.0734 & 0.1550 & 0.0712 & 0.0832 & 0.0772\\
    	& & ($0*7,5,0*7$)& 0.2451 & 0.0724 & 0.1899 & 0.0670 & 0.1473 & 0.0656\\
    	& & ($0*13,5,0$) & 0.3602 & 0.0695 & 0.2955 & 0.0566 & 0.2348 & 0.0595\\
    	& & ($0*14,5$)   & \bf0.4884 & \bf0.0642 & \bf0.4074 & \bf0.0534 & \bf0.2941 & \bf0.0493\\
    	& & ($3,0*13,2$) & 0.3289 & 0.0708 & 0.2526 & 0.0665 & 0.1900 & 0.0641\\\hline
			
	\end{tabular}
\end{table}

\begin{table}[h!]
	\centering
	\caption{Average Estimate and Variance of $\hat{T_{\alpha}}$ for N(0,1) distribution.}\label{TN2}
	\begin{tabular}{c c c c c c c c c}
		\hline
		& & & \multicolumn{2}{c}{$\alpha=1.10$}&\multicolumn{2}{c}{$\alpha=1.50$} & \multicolumn{2}{c}{$\alpha=2$}\\
		\hline
		$n$ & $r$ & Schemes & AE & Var & AE & Var & AE & Var\\
		\hline
		20 & 10 & ($10,0*9$) & 0.5466 & 0.0815 & 0.4286 & 0.0610 & 0.3332 & 0.0696\\
		& & ($0,10,0*8$) & 0.5509 & 0.0781 & 0.4301 & 0.0565 & 0.3386 & 0.0557\\
		& & ($0*4,10,0*5$) & 0.6267 & 0.0760 & 0.4507 & 0.0677 & 0.2964 & 0.0742\\
		& & ($0*5,10,0*4$) & 0.6655 & 0.0739 & 0.4866 & 0.0646 & 0.2992 & 0.0734\\
		& & ($0*8,10,0$) & 0.8831 & 0.0696 & 0.6517 & 0.0496 & 0.4594 & 0.0511\\
		& & ($0*9,10$)   & \bf1.2028 & \bf0.0587 & \bf0.8926 & \bf0.0331 & \bf0.6707 & \bf0.0131\\
		& & ($5,0*8,5$) & 0.9141 & 0.0690 & 0.7283 & 0.0443 &  0.5538 & 0.0287\\
		& & ($1*10$) & 0.7937 & 0.0716 & 0.5973 & 0.0533 & 0.4384 & 0.0530\\\hline
		20 & 15 & ($5,0*14$) & 0.8832 & 0.0394 & 0.6824 & 0.0222 & 0.5343 & 0.0137\\
		& & ($0,5,0*13$) & 0.8411 & 0.0389 & 0.6844 & 0.0222 & 0.5397 & 0.0133\\
		& & ($0*7,5,0*7$)& 0.8753 & 0.0375 & 0.6905 & 0.0216 & 0.5235 & 0.0138\\
		& & ($0*13,5,0$) & 0.9964 & 0.0372 & 0.7746 & 0.0189 & 0.5871 & 0.0126\\
		& & ($0*14,5$)   & \bf1.0980 & \bf0.0362 & \bf0.8625 & \bf0.0173 & \bf0.6543 & \bf0.0076\\
		& & ($3,0*13,2$) & 0.9558 & 0.0375 & 0.7593 & 0.0209 & 0.5903 & 0.0116\\\hline
		
	\end{tabular}
\end{table}

From these tables, we observe that when the sample size increases, the variance of the proposed estimator decreases. Also, when the average value of the proposed estimator increases, the variance of the estimator decreases. This is due to the fact that Tsallis entropy works as an information measure when $\alpha>1$. So a large value of Tsallis entropy indicates that the information of the corresponding experiment is high, which results in smaller variance. Among all progressive censoring schemes, type-II censoring has the maximum information. It can be observed that for type-II censoring schemes, the average estimate is the highest, i.e., the Tsallis entropy is maximum for type-II censoring schemes and consequently, the variance is the lowest. Furthermore, the performance of the proposed estimator is better for N(0,1) than for Exp(1) distribution.
 
\section{Estimation of Tsallis entropy through quantile approach}\label{S6} Both cdf and qf express the same information about the rv. `Sum of two qfs is also a qf' is an important property, which cdf does not possess; see Unnikrishnan Nair et al. (2014). Suppose $Q$ is the qf of a rv $X$ defined as

\begin{eqnarray}
Q(w)=F^{-1}(w)= \mbox{inf}\{x:F(x)\geq w\},\;0\leq w\leq 1.\nonumber
\end{eqnarray}
The quantile density function (qdf) is defined as $q(w)=\frac{d}{dw}Q(w)=\frac{1}{f(Q(w))}$. Then, we can express Tsallis entropy in terms of qdf as follows:

\begin{eqnarray}\label{e21}
T_{\alpha}(X)&=&\frac{1}{\alpha-1}\left(1-\int_{0}^{1}\left( q(w)\right)^{1-\alpha}dw \right)\nonumber\\
&=& \frac{1}{\alpha-1}\left(1-\int_{0}^{1}\left( f(Q(w))\right)^{\alpha-1}dw \right).
\end{eqnarray}
Then, following the work of Parzen (1979), an empirical estimator of qf can be given as

\begin{eqnarray}
Q_n(w)=X_{(i)},\,\,\,\frac{i-1}{n}<w\leq\frac{i}{n},\;\;\;i=1,2,\cdots,n.\nonumber
\end{eqnarray}
Parzen (1979) also provided a smooth estimator of qf to be
\begin{eqnarray}
\bar{Q}_n(w) = n\left(\frac{i}{n} -w\right)X_{(i-1)} + n\left(w- \frac{i-1}{n}\right)X_{(i)}, \ \mbox{for} \ \frac{i-1}{n} \leq w \leq \frac{i}{n},\nonumber
\end{eqnarray} 
where $ i= 1,\cdots,n$ and $X_{(0)} =0$. In order to estimate $T_{\alpha}(X)$, we first need to estimate $q(w)$. Jones (1992) proposed an estimator for qdf as
\begin{eqnarray}\label{e22}
\hat{q}(w)=\frac{1}{\hat{f}_n(Q_n(w))},
\end{eqnarray}
 where $\hat{f}_n$ is the kernel density estimator of $f$ defined as $$\hat{f}_n(x)=\frac{1}{nh_n}\sum_{i=1}^{n}K\left(\frac{x-X_i}{h_n}\right);$$ here $h_n$ is the bandwidth parameter and the kernel function $K$ satisfies the following properties:\\
1. $K(w)\geq0$ for all $w$;\\
2. $\int_{-\infty}^{+\infty}K(w)dw=1$;\\
3. $K(\cdot)$ is symmetric about zero;\\
4. $K(\cdot)$ satisfies the Lipschitz condition, i.e., there exists a positive constant $M$ such that $\mid K(v)-K(w)\mid\leq M\mid v-w\mid$, for all $u,v$.\\

\noindent Thus, an estimator of $T_{\alpha}(X)$ can be defined as

\begin{eqnarray}\label{e23}
T^q_{\alpha}(X)= \frac{1}{\alpha-1}\left(1-\int_{0}^{1}\left( \hat{f}_n(Q_n(w))\right)^{\alpha-1}dw \right).
\end{eqnarray}

\noindent We now examine the asymptotic properties of $T^q_{\alpha}(X)$. We first show that $T^q_{\alpha}(X)$ is a consistent estimator of $T_{\alpha}(X)$.

\begin{thm}\label{th1}
	Let $X$ be a rv with pdf $f(\cdot)$ and qdf $q(\cdot)$. Suppose $\hat{q}(w)$ is the estimator of $q(w)$ defined in (\ref{e22}). Then, $T^q_{\alpha}(X)$ is consistent for $T_{\alpha}(X)$.
\end{thm}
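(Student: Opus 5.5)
The plan is to reduce consistency of $T^q_{\alpha}(X)$ to uniform consistency of the plug-in $\hat f_n(Q_n(w))$ for $f(Q(w))$, then pass the limit through the integral in (\ref{e23}).

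Since
\[
T^q_{\alpha}(X)-T_{\alpha}(X)=\frac{1}{1-\alpha}\int_0^1\left[\left(\hat f_n(Q_n(w))\right)^{\alpha-1}-\left(f(Q(w))\right)^{\alpha-1}\right]dw,
\]
it suffices to show that the integral tends to zero in probability. I will split the integrand as
\[
\left[\hat f_n(Q_n(w))^{\alpha-1}-f(Q_n(w))^{\alpha-1}\right]+\left[f(Q_n(w))^{\alpha-1}-f(Q(w))^{\alpha-1}\right].
\]

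For the first bracket I will use the classical uniform consistency of the kernel density estimator: under the stated conditions on $K$ (nonnegative, integrable, symmetric, Lipschitz), together with $h_n\to0$ and $nh_n^2\to\infty$ (or $nh_n/\log n\to\infty$), and $f$ uniformly continuous, we have $\sup_x|\hat f_n(x)-f(x)|\to0$ almost surely. I will state these bandwidth and smoothness conditions explicitly as standing assumptions; they are the ones used by Jones (1992).

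For the second bracket I will use the Glivenko–Cantelli-type result $\sup_{w\in[\epsilon,1-\epsilon]}|Q_n(w)-Q(w)|\to0$ almost surely. This holds on compact subintervals wherever $Q$ is continuous, i.e.\ where $f>0$ on the support. Continuity of $f$ then gives $f(Q_n(w))\to f(Q(w))$ uniformly on $[\epsilon,1-\epsilon]$.

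Next, the map $t\mapsto t^{\alpha-1}$ is uniformly continuous on $[c,C]$ for $0<c<C<\infty$. Working on $[\epsilon,1-\epsilon]$, where $f(Q(w))$ is bounded away from $0$ and $\infty$, both brackets therefore tend to $0$ uniformly in $w$ almost surely. This is the content of Theorem~\ref{th1} restricted to the middle range.

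The main obstacle is the tails $w\in(0,\epsilon)\cup(1-\epsilon,1)$, where $f(Q(w))$ may tend to $0$ and, for $\alpha<1$, the power $t^{\alpha-1}$ blows up.
\begin{itemize}
\item For $\alpha>1$ I will bound the tail contributions by $2\epsilon\,(\sup f+1)^{\alpha-1}$, which holds eventually by the uniform convergence of $\hat f_n$. This tail bound is then made small by first choosing $\epsilon$, and the middle-range result finishes the case.
\item For $\alpha<1$ I will assume $\int_0^1 q(w)^{1-\alpha}\,dw<\infty$, which is exactly finiteness of $T_{\alpha}(X)$, and try to dominate the tail integrals via $\hat f_n(Q_n(w))\ge \tfrac12 f(Q(w))$ eventually. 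This domination may require a tail-regularity assumption on $f$, such as a lower bound near the endpoints. If that fails, I will instead truncate and invoke convergence in probability through a Markov-type bound on the tail integral.
\end{itemize}

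Letting $n\to\infty$ and then $\epsilon\to0$ yields $T^q_{\alpha}(X)\to T_{\alpha}(X)$.
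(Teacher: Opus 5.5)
Your treatment of the middle range $[\epsilon,1-\epsilon]$ and of the tails for $\alpha>1$ is sound. It is more careful than the paper's proof, which only cites pointwise consistency of $\hat f_n(Q_n(w))$ from Soni et al.\ (2012), applies the continuous mapping theorem, and then integrates over $w$ without justifying the interchange of limit and integral. (A small point: global uniform continuity of $f$ excludes Exp(1), whose density jumps at $0$. Uniform convergence of $\hat f_n$ on compact subsets of the interior of the support is all the middle range needs.)

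The genuine gap is the case $0<\alpha<1$. The theorem covers it, and the paper uses it ($\alpha=0.5,\,0.15,\,0.75$), but your proposal contains no argument there. The bound $\hat f_n(Q_n(w))\ge\tfrac12 f(Q(w))$ cannot come from $\sup_x|\hat f_n(x)-f(x)|\to0$. An additive error gives no multiplicative lower bound exactly where $f(Q(w))\to0$. Moreover, for $w\le 1/n$ the quantity $\hat f_n(Q_n(w))=\hat f_n(X_{(1)})$ depends on the few sample points near $X_{(1)}$. So there is no reason for the inequality to hold simultaneously for all $w$ in a fixed tail, eventually almost surely. The ``Markov-type bound'' you offer as a fallback is precisely the content of this case, and it is not carried out. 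The obvious deterministic bound $\hat f_n(X_i)\ge K(0)/(nh_n)$ only yields $\hat f_n(X_i)^{\alpha-1}\le (nh_n/K(0))^{1-\alpha}\to\infty$, so a genuinely probabilistic estimate is required.

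\textbf{Rewriting the estimator.} Since $Q_n$ is a step function,
$\int_0^1\hat f_n(Q_n(w))^{\alpha-1}dw=\frac1n\sum_{i=1}^n\hat f_n(X_i)^{\alpha-1}$ exactly. Set $a=Q(\epsilon)$ and $b=Q(1-\epsilon)$.
\begin{itemize}
\item The terms with $X_i\in[a,b]$ converge to $\int_a^b f^\alpha$ by uniform convergence of $\hat f_n$ on $[a,b]$ plus the SLLN. No Glivenko--Cantelli argument for $Q_n$ is needed.
\item It remains to show $\limsup_{n}E\bigl[\hat f_n(X_1)^{\alpha-1}\mathbf{1}\{X_1\notin[a,b]\}\bigr]\to0$ as $\epsilon\to0$.
\end{itemize}

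\textbf{Bounding the tail expectation.} Assume $K\ge c>0$ on $[-\delta,\delta]$.
\begin{itemize}
\item Conditionally on $X_1=x$, we have $\hat f_n(X_1)\ge\frac{c}{nh_n}(1+N)$, where $N\sim\mathrm{Bin}(n-1,p_n(x))$ and $p_n(x)=F(x+\delta h_n)-F(x-\delta h_n)$.
\item Jensen's inequality (concavity of $t\mapsto t^{1-\alpha}$) together with $E[(1+N)^{-1}]\le(np_n(x))^{-1}$ gives $E[\hat f_n(X_1)^{\alpha-1}\mid X_1=x]\le(2\delta c)^{\alpha-1}\bar f_n(x)^{\alpha-1}$. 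Here $\bar f_n(x)=p_n(x)/(2\delta h_n)$ is the local average of $f$.
\item If $\bar f_n\ge f/2$ on the tails for large $n$ (true, e.g., when $f$ decreases monotonically toward the ends of its support), the tail expectation is at most a constant times $\int_{x\notin[a,b]}f^\alpha(x)\,dx$.
\item This integral is small because $T_\alpha(X)$ is finite. Markov's inequality then yields consistency in probability.
\end{itemize}

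Since the theorem as stated carries no hypotheses, adding a tail-regularity condition of this kind (and a lower bound on $K$ near $0$) is legitimate. Without it, or an equivalent argument, the case $\alpha<1$ remains unproved.
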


\begin{proof}
	From (\ref{e23}), we have
	
	\begin{eqnarray}\label{e24}
	T^q_{\alpha}(X)-T_{\alpha}(X)=-\frac{1}{\alpha-1}\left( \int_{0}^{1}\left( \hat{f}^{\alpha-1}_n(Q_n(w))-f^{\alpha-1}(Q(w))\right)dw\right).
	\end{eqnarray}

\noindent Soni et al. (2012) proved that $\hat{f}_n(Q_n(w))$ is consistent for $f(Q(w))$, i.e., as $n\to \infty$, $\hat{f}_n(Q_n(w))-f(Q(w))\to 0$. Then, from continuous mapping theorem, we have\\ $\hat{f}^{\alpha-1}_n(Q_n(w))-f^{\alpha-1}(Q(w))\to 0$ as $n\to \infty$ and so from (\ref{e24}), we can write $T^q_{\alpha}(X)-T_{\alpha}(X)\to 0$ as $n\to \infty$. Hence, the theorem.
\end{proof}

\noindent The following theorem shows that $T^q_{\alpha}(X)$ is asymptotically normal.

\begin{thm}\label{th2}
	Suppose $X$ is a rv having Tsallis entropy $T_{\alpha}(X)$ and $T^q_{\alpha}(X)$ is the estimator of $T_{\alpha}(X)$ defined in (\ref{e23}). Then, $\sqrt{n}\left( T^q_{\alpha}(X)-T_{\alpha}(X)\right) $ is asymptotically normally distributed with mean zero and variance $$\sigma^2=\frac{n}{(\alpha-1)^2}E\left( \int_{0}^{1}\hat{f}^{\alpha-1}_n(Q_n(w))dw\right)^2.$$
\end{thm}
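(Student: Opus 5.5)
Writing $\theta_n(w)=\hat f_n(Q_n(w))$ and $\theta(w)=f(Q(w))$, the identity (\ref{e24}) used for Theorem \ref{th1} gives
\begin{equation*}
\sqrt{n}\left(T^q_{\alpha}(X)-T_{\alpha}(X)\right)=-\frac{\sqrt{n}}{\alpha-1}\int_{0}^{1}\left(\theta_n^{\alpha-1}(w)-\theta^{\alpha-1}(w)\right)dw .
\end{equation*}
So the task reduces to a central limit theorem for a smooth functional of the process $\theta_n$. I would linearize this integral into a sum of i.i.d.\ terms, apply the Lindeberg--L\'evy CLT to that sum, and finally identify the variance with the expression in the statement.

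For the linearization I would take a first-order (delta-method) expansion of $u\mapsto u^{\alpha-1}$ around $\theta(w)$:
\begin{equation*}
\theta_n^{\alpha-1}(w)-\theta^{\alpha-1}(w)=(\alpha-1)\,\theta^{\alpha-2}(w)\left(\theta_n(w)-\theta(w)\right)+R_n(w).
\end{equation*}
The remainder $R_n(w)$ is quadratic in $\theta_n(w)-\theta(w)$. Next I would split
\begin{equation*}
\theta_n(w)-\theta(w)=\left[\hat f_n(Q_n(w))-\hat f_n(Q(w))\right]+\left[\hat f_n(Q(w))-f(Q(w))\right].
\end{equation*}
For the first bracket I would use the Lipschitz condition 4 on $K$ together with the Bahadur representation
\begin{equation*}
Q_n(w)-Q(w)=q(w)\left(w-F_n(Q(w))\right)+o_p(n^{-1/2}).
\end{equation*}
For the second bracket I would use the standard kernel decomposition of $\hat f_n(x)$ into the centered sum $\frac{1}{nh_n}\sum_i\{K((x-X_i)/h_n)-EK\}$ plus the bias term. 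After integrating in $w$, both pieces become averages $n^{-1}\sum_i\psi(X_i)$ for a fixed function $\psi$ with $E\psi(X_1)=0$, up to $o_p(n^{-1/2})$. The reason is that integrating a kernel-smoothed term against $dw=f(Q(w))\,dQ(w)$ removes the $h_n^{-1}$ singularity, exactly as in the classical $\sqrt{n}$-consistency of integral functionals of density estimators.

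With that representation in hand, the Lindeberg--L\'evy CLT gives asymptotic normality with mean zero, provided the bias term is $o(n^{-1/2})$. I would ensure this by imposing the bandwidth conditions $nh_n^4\to0$ and $nh_n\to\infty$, as in Soni et al.\ (2012), and by assuming $\int_0^1\theta^{2(\alpha-2)}(w)\,dw<\infty$ so that the variance is finite. To match the stated variance, I would observe that the statement's $\sigma^2$ is a second moment of $\sqrt{n}\int_0^1\theta_n^{\alpha-1}(w)\,dw/(\alpha-1)$. I would therefore read it as the plug-in form of $\mathrm{Var}\bigl(\sqrt{n}\,T^q_{\alpha}(X)\bigr)$, obtained by writing the asymptotic variance $\frac{1}{(\alpha-1)^2}\lim n\,\mathrm{Var}\left(\int_0^1\theta_n^{\alpha-1}(w)\,dw\right)$ and expressing it in terms of the estimator itself. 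Taken literally, $\sigma^2$ is the raw second moment and includes the centering term $n\left(\int_0^1\theta^{\alpha-1}(w)\,dw\right)^2$, so it agrees with the asymptotic variance only after that term is subtracted; I expect the formula to need this reading.

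The main obstacle is controlling the remainder $R_n$ and the first bracket uniformly in $w$ near $0$ and $1$. In the tails $\theta(w)\to0$, so $\theta^{\alpha-2}$ blows up when $\alpha<2$, and the Bahadur representation degrades there. My first attempt would be to truncate to $[\epsilon,1-\epsilon]$. On that interval I would use the uniform consistency $\sup_{w}|\theta_n(w)-\theta(w)|\to0$ (strengthening the pointwise result used in Theorem \ref{th1}), which gives $\sqrt{n}\int_\epsilon^{1-\epsilon}|R_n(w)|\,dw=O_p\!\left(\sqrt{n}\,\|\theta_n-\theta\|_\infty^2\right)=o_p(1)$ under suitable bandwidth conditions. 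The tail contributions I would then show to be uniformly negligible as $\epsilon\to0$ using integrability assumptions on $f$ near the endpoints of its support.
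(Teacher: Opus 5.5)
Your architecture differs from the paper's, and it is the sounder of the two. The paper quotes Soni et al.\ (2012) for pointwise $\sqrt{n}$-normality of $\hat f_n(Q_n(w))$ and applies the delta method at each fixed $w$. It then invokes a ``functional delta method and Slutsky'' to pass to the integral, and never actually derives $\sigma^2$. Pointwise, however, $\hat f_n(Q_n(w))-f(Q(w))$ is only of order $(nh_n)^{-1/2}$, and pointwise normality would not give normality of the integral anyway. The $\sqrt{n}$ rate is produced by the integration, so reducing the integral to an i.i.d.\ average is the right mechanism. You are also right that $\sigma^2$ cannot be taken literally, since it grows like $n(\int f^{\alpha})^2/(\alpha-1)^2$. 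One correction: you must subtract $n\bigl(E\int_0^1\theta_n^{\alpha-1}(w)\,dw\bigr)^2$, i.e.\ read $\sigma^2$ as a variance. Subtracting $n\bigl(\int_0^1\theta^{\alpha-1}(w)\,dw\bigr)^2$ instead leaves a term of order $n$ times the bias, which diverges.

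The step that fails is your treatment of the first bracket. Condition 4 gives only $|\hat f_n(x)-\hat f_n(y)|\le Mh_n^{-2}|x-y|$, so your bound is $O_p(n^{-1/2}h_n^{-2})=O_p\bigl((nh_n^4)^{-1/2}\bigr)$, which diverges under your own requirement $nh_n^4\to0$. The Bahadur remainder multiplied by $Mh_n^{-2}$ is not $o_p(n^{-1/2})$ either. Turning $q(w)(w-F_n(Q(w)))$ into a linear term would need $\hat f_n'\to f'$, which a Lipschitz kernel does not supply. You can avoid the quantile process entirely. Since $Q_n\equiv X_{(i)}$ on $((i-1)/n,i/n]$, splitting \emph{before} linearizing gives the exact identity
\[
\int_0^1\theta_n^{\alpha-1}(w)\,dw-\int_0^1\theta^{\alpha-1}(w)\,dw=\int\hat f_n^{\alpha-1}\,d(F_n-F)+\int\bigl(\hat f_n^{\alpha-1}-f^{\alpha-1}\bigr)\,dF .
\]
The second term is essentially your second bracket and contributes $(\alpha-1)\int f^{\alpha-1}\,d(F_n-F)$. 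The first equals $\int f^{\alpha-1}\,d(F_n-F)$ plus a $V$-statistic whose first-order Hoeffding projections cancel. Its diagonal, however (the self-term $K(0)/(nh_n)$ inside $\hat f_n(X_i)$), is a bias of order $(nh_n)^{-1}$. That bias, together with your quadratic remainder (also of order $(nh_n)^{-1}$ after integration), forces $\sqrt{n}\,h_n\to\infty$, so $nh_n\to\infty$ is not enough. The upshot is
\[
\sqrt{n}\left(T^q_{\alpha}(X)-T_{\alpha}(X)\right)\to N\left(0,\ \frac{\alpha^2}{(\alpha-1)^2}\left(\int f^{2\alpha-1}(x)\,dx-\Bigl(\int f^{\alpha}(x)\,dx\Bigr)^2\right)\right).
\]
Finally, your moment assumption is the wrong one. $\int_0^1\theta^{2(\alpha-2)}(w)\,dw=\int f^{2\alpha-3}(x)\,dx$ is infinite for every density with unbounded support once $\alpha\le 3/2$, including the normal and exponential. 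What is actually needed is $\int f^{2\alpha-1}<\infty$. That condition in turn fails for $\alpha\le 1/2$ on unbounded support, so some restriction on $\alpha$ is unavoidable.
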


\begin{proof}
	From (\ref{e24}), we get
	
	\begin{eqnarray}\label{e25}
	\sqrt{n}\left( T^q_{\alpha}(X)-T_{\alpha}(X)\right) =-\frac{\sqrt{n}}{\alpha-1}\left( \int_{0}^{1}\left( \hat{f}^{\alpha-1}_n(Q_n(w))-f^{\alpha-1}(Q(w))\right)dw\right).
	\end{eqnarray}
	Soni et al. (2012) showed that $\sqrt{n}\left( \hat{f}_n(Q_n(w))-f(Q(w))\right)$ is asymptotically normal with zero mean and variance $\tau^2$, say. Then, by using delta method, we can easily show that $$\sqrt{n}\left( \hat{f}^{\alpha-1}_n(Q_n(w))-f^{\alpha-1}(Q(w))\right)\sim N(0,\left((\alpha-1)f^{\alpha-2}(Q(w))\right)^2\tau^2).$$ 
	As $\sqrt{n}\left( \hat{f}^{\alpha-1}_n(Q_n(w))-f^{\alpha-1}(Q(w))\right)$ is asymptotically normal with zero mean, by applying functional delta method and Slutsky's theorem to (\ref{e25}), we then obtain the desired result.
	
\end{proof}

\begin{re}\label{R1}
Soni et al. (2012) proposed an estimator for qdf as 
\begin{eqnarray}
\tilde{q}_n(w)=\frac{1}{h_n}\int_{0}^{1}\frac{K\left( \frac{z-w}{h_n}\right) }{\hat{f}_n(Q_n(w))}dz,\nonumber
\end{eqnarray}
where $K$ is the kernel function and $h_n$ is the bandwidth parameter. This estimator is consistent and asymptotically normal. Using $\tilde{q}_n(w)$, we can similarly define another estimator of Tsallis entropy as 
$$\tilde{T}_{\alpha}(X)=\frac{1}{\alpha-1}\left(1-\int_{0}^{1}\left(\tilde{q}(w)\right)^{1-\alpha}dw \right).$$ Proceeding along the same lines as in Theorems \ref{th1} and \ref{th2}, we can establish the consistency and asymptotic normality of $\tilde{T}_{\alpha}(X)$ as well.
\end{re}

\section{Empirical study}\label{S7} In this section, we perform a simulation study to assess the performance of the estimator $T^q_{\alpha}(X)$. For reference distributions, we consider standard normal, standard exponential and Govindarajulu distributions. Govindarajulu distribution is an important distribution to model lifetime data, and it has the qf given by
$$Q(w)=\mu+\sigma\left((\gamma+1)w^{\gamma}-\gamma w^{\gamma+1}\right),\;\;\mu,\sigma,\gamma>0,\;\;0\leq w \leq 1,$$
and is denoted by Govindarajulu ($\mu,\sigma,\gamma$). We generate 5000 random samples from N(0,1) and Exp(1) distributions with sample size $n$ = 20, 50 and 100. We consider $\alpha$ = 0.5, 1.5, 2, 2.5 and 3 and then calculate the bias and the MSE of $T^q_{\alpha}(X)$. We use the Gaussian kernel and select bandwidth according to the normal reference rule $h_n=1.06\sigma n^{-\frac{1}{5}}$ (cf. Silverman, 2018). The obtained results are provided in Table \ref{T2.1}. From this table, we observe that as the sample size increases, the bias and the MSE of $T^q_{\alpha}(X)$ for both distributions decrease. The estimator performs better when $\alpha$ increases. Also, $T^q_{\alpha}(X)$ performs better for N(0,1) than Exp(1) distribution. This may be due to the fact that we have chosen Gaussian Kernel and that the choice of $h_n$ is appropriate when data come from normal distribution.\\ 

\begin{table}[H]
	\centering
	\caption{Bias and MSEs of $T^q_{\alpha}(X)$ for N(0,1) and Exp(1) distributions.}\label{T2.1}
	\begin{tabular}{p{1cm} p{1cm} p{2cm} p{2cm} p{2cm} p{1cm}}
		\hline
		$n$ & $\alpha$ &  \multicolumn{2}{c} {N(0,1)} & \multicolumn{2}{c}{Exp(1)}  \\ \cline{3-6}
	   &      &  Bias    & MSE      &  Bias    &  MSE   \\ \hline
	20 & 0.5  & -0.4516  & 0.3130   & -0.2156  & 0.3142\\
	   & 1.5  & -0.0200  & 0.0084   &  0.1582  & 0.0497\\
	   &  2   & -0.0047  & 0.0028   &  0.1361  & 0.0275\\
	   & 2.5  & -0.0006  & 0.0008   &  0.1110  & 0.0159\\
	   &  3   & -0.00005 & 0.0003   &  0.0874  & 0.0096\\
   50  & 0.5  & -0.3367  & 0.1618   & -0.1497  & 0.1223\\
       & 1.5  & -0.0014  & 0.0029   &  0.1547  & 0.0328\\
       &  2   &  0.0030  & 0.0010   & 0.1359   & 0.0214\\
       & 2.5  &  0.0034  & 0.0003   & 0.1075   & 0.0129\\
       &  3   &  0.0023  & 0.0001   & 0.0859   & 0.0081\\
  100  & 0.5  & -0.2588  & 0.0917   & -0.1007  & 0.0676\\
       & 1.5  & -0.0014  & 0.0014   & 0.1488   & 0.0263\\
       &  2   &  0.0044  & 0.0005   & 0.1223   & 0.0165\\
       & 2.5  &  0.0038  & 0.0002   & 0.1030   & 0.0113\\
       &  3   &  0.0019  & 0.00006  & 0.0846   & 0.0075\\       
		
		\hline
	\end{tabular}
\end{table}

To check the asymptotic normality of $T^q_{\alpha}(X)$, we generate 1000 samples of size 100 and 200 from N(0,1) and Exp(1) and calculate ($T^q_{\alpha}(X)-T_{\alpha}(X)$) for $\alpha$ = 0.5 and 1.5. Now, ($T^q_{\alpha}(X)-T_{\alpha}(X)$) is asymptotically normal with zero mean. We plot the histograms of ($T^q_{\alpha}(X)-T_{\alpha}(X)$) in Figures \ref{fig3} and \ref{fig4}, respectively. From these figures, we see that the asymptotic distributions are indeed normal for both cases. Also, we check the normality for the simulated data set of ($T^q_{\alpha}(X)-T_{\alpha}(X)$) for each scenario using Kolmogorov-Smirnov test and every time the test passed with large p-values. Although the bandwidth is chosen based on normal reference rule, it still provides good results even when the underlying distribution is exponential. Other choices for the bandwidth parameter can also be explored when the underlying data are not normal. 
\begin{figure}
	\begin{subfigure}{0.5\textwidth}
		\centering
		\includegraphics[width=1\linewidth]{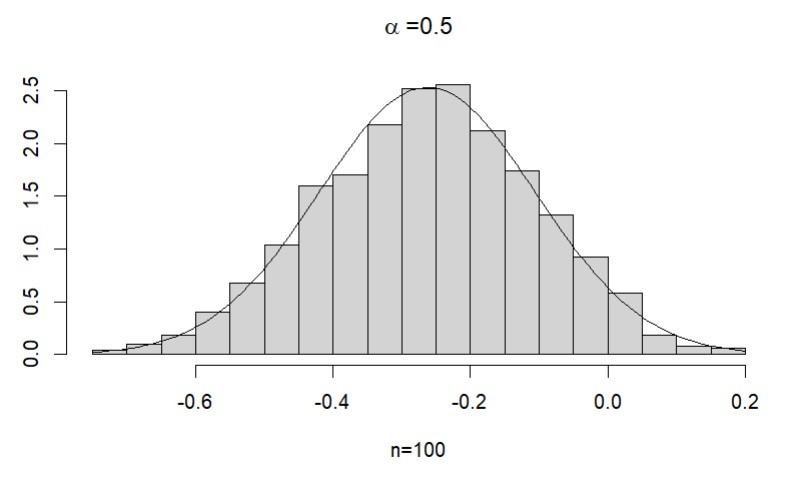}
	\end{subfigure}\hfill
	\begin{subfigure}{0.5\textwidth}
		\centering
		\includegraphics[width=1\linewidth]{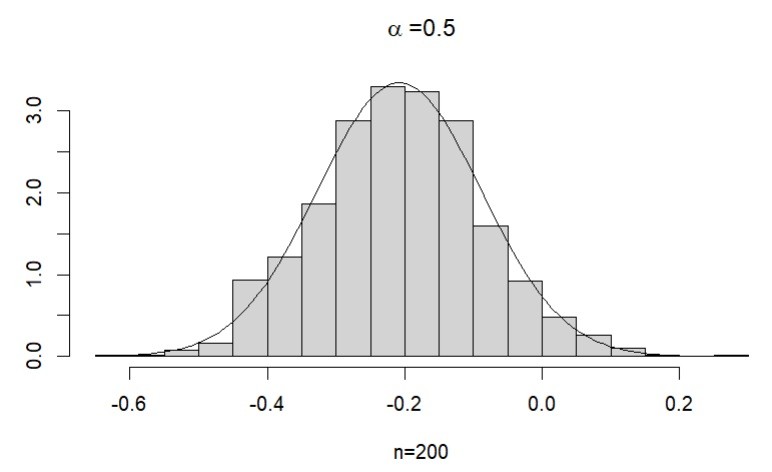}
	\end{subfigure}
	\begin{subfigure}{0.5\textwidth}
		\centering
		\includegraphics[width=1\linewidth]{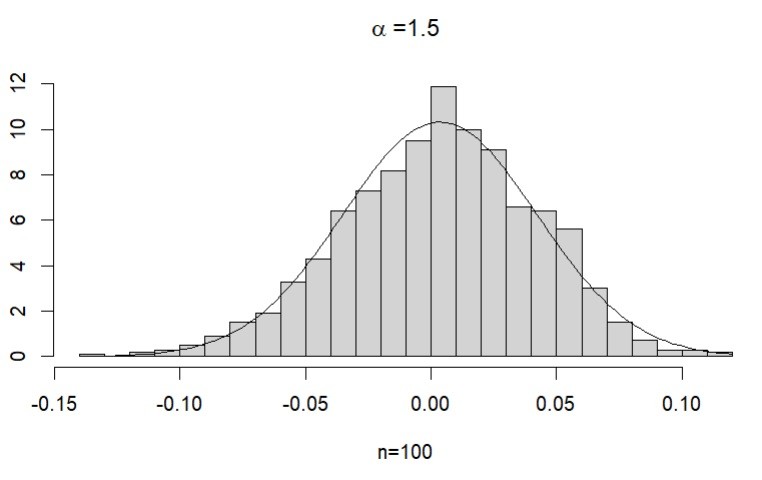}
	\end{subfigure}\hfill
	\begin{subfigure}{0.5\textwidth}
		\centering
		\includegraphics[width=1\linewidth]{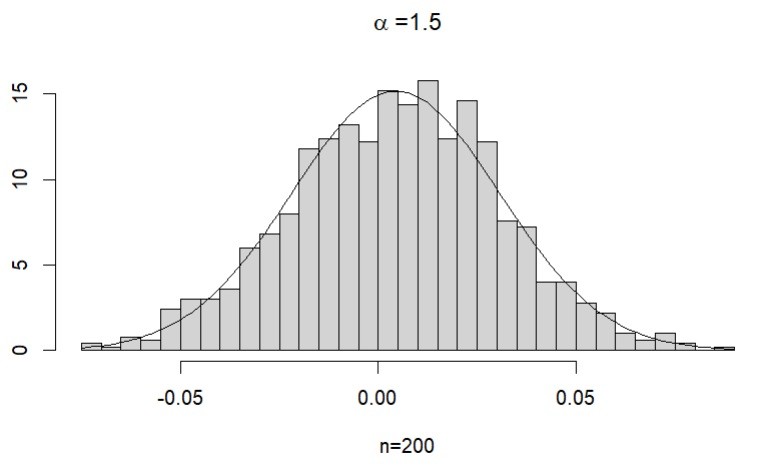}
	\end{subfigure}
	\caption{Histogram of $T^q_{\alpha}(X)-T_{\alpha}(X)$ for N(0,1).}
	\label{fig3}
\end{figure}

\begin{figure}
	\begin{subfigure}{0.5\textwidth}
		\centering
		\includegraphics[width=1\linewidth]{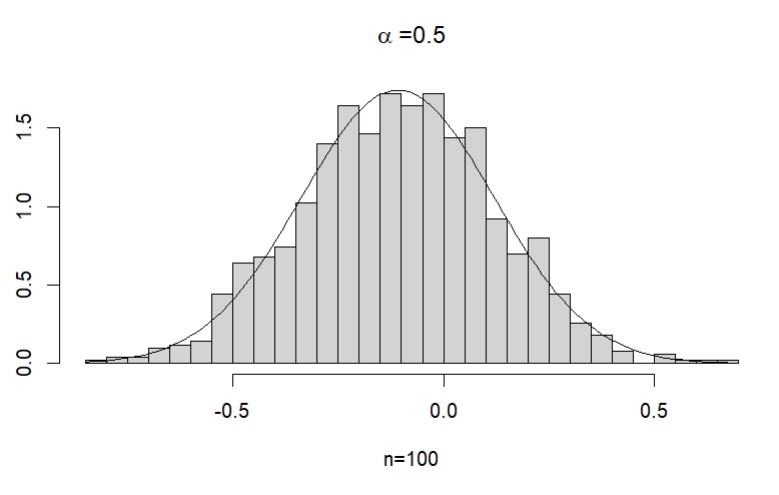}
	\end{subfigure}\hfill
	\begin{subfigure}{0.5\textwidth}
		\centering
		\includegraphics[width=1\linewidth]{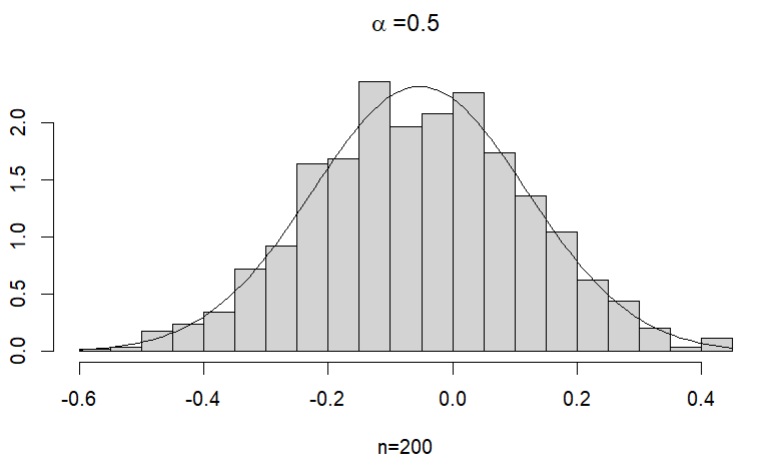}
	\end{subfigure}
	\begin{subfigure}{0.5\textwidth}
		\centering
		\includegraphics[width=1\linewidth]{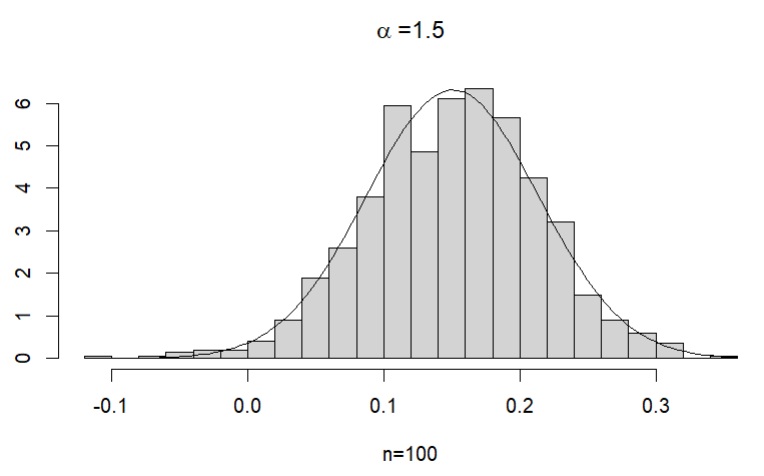}
	\end{subfigure}\hfill
	\begin{subfigure}{0.5\textwidth}
		\centering
		\includegraphics[width=1\linewidth]{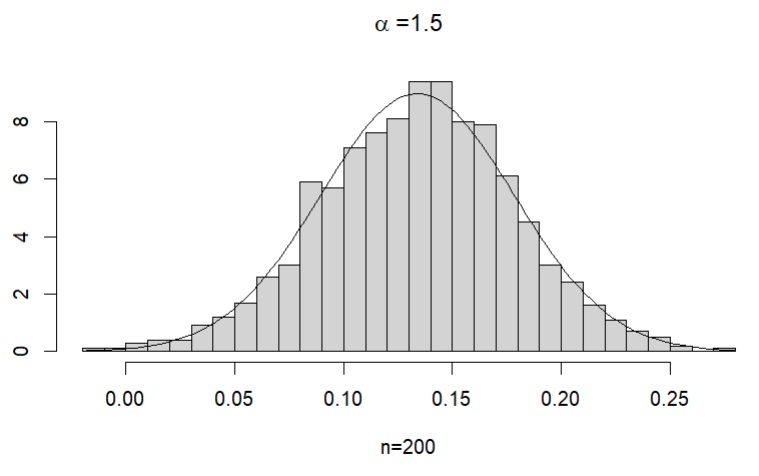}
	\end{subfigure}
	\caption{Histogram of $T^q_{\alpha}(X)-T_{\alpha}(X)$ for Exp(1).}
	\label{fig4}
\end{figure}

There are some distributions that do not have tractable cdf, but have simple and closed form of qf, such as power-Pareto, skew lambda and Govindarajulu distributions. In these situations, it is advantageous to use $T^q_{\alpha}(X)$ to estimate Tsallis entropy. 
We simulate 5000 random samples from Govindarajulu distribution with parameters (0,0.75,0.25) and (0,0.25,0.75), for $n$ = 50, 100 and 200, and then calculate the bias and the MSE of $T^q_{\alpha}(X)$. We chose $\alpha$ = 0.15 and 0.75. Note that, for these models, Tsallis entropy does not exist for $\alpha>1$. We provide the obtained results in Table \ref{T2.2}. The true value of Tsallis entropy are also included in the table. From this table, we observe that the estimator performs well for these models. Both Bias and MSE decrease as sample size increases and the performance of the estimator seems to be better when $\alpha$ is small.

\begin{table}[H]
	\centering
	\caption{Bias and MSEs of $T^q_{\alpha}(X)$ for Govindarajulu distribution.}\label{T2.2}
	\begin{tabular}{p{2.5cm} p{1cm} p{2cm} p{1cm} p{2cm} p{1cm}}
		\hline
	$(\mu,\sigma,\gamma)$ & $\alpha$ &  True value & $n$ &  Bias &  MSE \\\hline
		(0,0.75,0.25) & 0.15 & -0.4592 &  50 & -0.1298 & 0.0242\\
		              &      &         & 100 & -0.1033 & 0.0148\\
		              &      &         & 200 & -0.0784 & 0.0087\\
		              & 0.75 & -1.1769 &  50 & 0.2353  & 0.0733\\
		              &      &         & 100 & 0.2299  & 0.0608\\
		              &      &         & 200 & 0.2245  & 0.0542\\
		(0,0.25,0.75) & 0.15 & -0.8308 &  50 & 0.0357  & 0.0018\\
		              &      &         & 100 & 0.0351  & 0.0014\\
		              &      &         & 200 & 0.0335  & 0.0012\\
		              & 0.75 & -1.3919 &  50 & 0.1921  & 0.0416\\
		              &      &         & 100 & 0.1836  & 0.0354\\
		              &      &         & 200 & 0.1730  & 0.0308\\             
		\hline
	\end{tabular}
\end{table}

\section{Applications to Goodness-of-fit testing}\label{S8} In this section, we develop goodness-of-fit tests for normal and exponential distributions. We use $\widehat{T}_{\alpha}(F,F_{\theta})=\frac{1}{\alpha-1}\left(\frac{1}{n}\sum_{i=1}^{n}\left(\frac{2m}{n(F_{\widehat{\theta}}(X_{(i+m)})-F_{\widehat{\theta}}(X_{(i-m)}))} \right)^{\alpha-1}-1\right)$ as the test statistic and choose $m$ as $m=\sqrt{n}+0.5$.

\begin{table}[h!]
	\centering
	\caption{Power of the tests for N(0,1) distribution.}\label{tab3}
	\begin{tabular}{|c| c| c| c c c| c| c c c|}
		\hline
		Alternatives & $\alpha$ & $n(m)$ & $\widehat{T}_{\alpha}(F,F_{\theta})$ & $KL(F,F_{\hat{\theta}})$ & KS & $n(m)$ & $\widehat{T}_{\alpha}(F,F_{\theta})$ & $KL(F,F_{\hat{\theta}})$ & KS\\
		\hline
		N(0,1)& 0.5 & 10(3)   & 0.054 & 0.055 & 0.047 & 15(4) & 0.052 & 0.051 & 0.048\\ 
		& 2   &     & 0.049 & 0.055 & 0.047 & & 0.051 & 0.051 & 0.048\\
		& 3   &     & 0.053 & 0.055 & 0.047 & & 0.054 & 0.051 & 0.048\\[1ex]
		C(0,1)& 0.5 &   & \bf0.545 & 0.481 & 0.489 & & \bf0.697 & 0.613 & 0.685\\
		& 2   &   & 0.373 & 0.481 & \bf0.489 & & 0.510 & 0.613 & \bf0.685\\
		& 3   &   & 0.320 & 0.481 & \bf0.489 & & 0.454 & 0.613 & \bf0.685\\[1ex]
		Exp(1)& 0.5 &  & \bf0.489 & 0.474 & 0.418 & & \bf0.725 & 0.712 & 0.562\\
		& 2   &   & 0.409 & \bf0.474 & 0.418 & & 0.666 & \bf0.712 & 0.562\\
		& 3         &   & 0.399 & \bf0.474 & 0.418 & & 0.649 & \bf0.712 & 0.562\\[1ex]
		GA(2)  & 0.5 &   & \bf0.252 & 0.233 & 0.251 & & \bf0.402 & 0.360 & 0.343\\
		& 2   & & 0.188 & 0.233 & \bf0.251 & & 0.318 & \bf0.360 & 0.343\\
		& 3   & & 0.175 & 0.233 & \bf0.251 & & 0.269 & \bf0.360 & 0.343\\[1ex]
		GA(0.5)& 0.5 &  & 0.795 & \bf0.798 & 0.660 & & \bf0.957 & 0.952 & 0.821\\
		& 2   & & 0.764 & \bf0.798 & 0.660 & & 0.948 & \bf0.952 & 0.821\\
		& 3   & & 0.756 & \bf0.798 & 0.660 & & 0.942 & \bf0.952 & 0.821\\[1ex]
		U(0,1)    & 0.5 & & 0.089 & \bf0.134 & 0.073& & 0.180 & \bf0.236 & 0.074\\
		& 2   & & \bf0.143 & 0.134 & 0.073 & & \bf0.260 & 0.236 & 0.074\\
		& 3   & & \bf0.154 & 0.134 & 0.073 & & \bf0.258 & 0.236 & 0.074\\[1ex]
		Beta(2,1) & 0.5 & & 0.160 & \bf0.171 & 0.023 & & 0.270 & \bf0.296 & 0.024\\
		& 2   & & 0.170 & \bf0.171 & 0.023 & & \bf0.302 & 0.296 & 0.024\\
		& 3   & & \bf0.176 & 0.171 & 0.023& & \bf0.298 & 0.296 & 0.024\\\hline
	\end{tabular}
\end{table}

\subsection{Test for Normality} Suppose $X_1,\cdots,X_n$ is a random sample and we want to test whether the data come from N($\mu,\sigma^2$) population. For normality test, we replace $\mu$ and $\sigma^2$ in $\widehat{T}_{\alpha}(F,F_{\hat{\theta}})$ with their maximum likelihood estimates (MLE). The MLE of $\mu$ and $\sigma^2$ are $\bar{X}$ and $\frac{1}{n}\sum_{i=1}^{n}(X_i-\bar{X})^2$, respectively. Note that the statistic $\widehat{T}_{\alpha}(F,F_{\hat{\theta}})$ is scale invariant. We simulate 10000 random samples of size $n$ = 10, 15, 20 and 25 from N(0,1) distribution and calculate the critical points at 5\% level of significance. We take $\alpha$ = 0.5, 2 and 3 and compare the power of the tests with Kolmogorov-Smirnov (KS) test and KL divergence ($\alpha\to1$) -based tests. For power comparison, we consider Cauchy, exponential, gamma, uniform and beta distributions. The obtained results are presented in Tables \ref{tab3} and \ref{tab4}. From these tables, we observe that the proposed test performs better than KL and KS tests for Cauchy, exponential and gamma distributions when $\alpha$ = 0.5. For uniform and beta distributions, the proposed test performs better when $\alpha$ = 2 and 3. Note that the KL test is a special case of the proposed test. These two tests, in combination, outperform KS test except for Cauchy distribution when $\alpha$ = 2 and 3. The proposed test depends on the choice of $m$, and when the sample size increases, the power of all the tests also increase and the tests do attain the nominal significance level.

\begin{table}[h!]
	\centering
	\caption{Power of the tests for N(0,1) distribution.}\label{tab4}
	\begin{tabular}{|c| c| c| c c c| c| c c c|}
		\hline
		Alternatives & $\alpha$ & $n(m)$ & $\widehat{T}_{\alpha}(F,F_{\theta})$ & $KL(F,F_{\hat{\theta}})$ & KS & $n(m)$ & $\widehat{T}_{\alpha}(F,F_{\theta})$ & $KL(F,F_{\hat{\theta}})$ & KS\\
		\hline
		N(0,1)& 0.5 & 20(4)   & 0.049 & 0.050 & 0.050 & 25(5) & 0.052 & 0.052 & 0.049\\ 
		& 2   &     & 0.050 & 0.050 & 0.050 & & 0.050 & 0.052 & 0.049\\
		& 3   &     & 0.049 & 0.050 & 0.050 & & 0.051 & 0.052 & 0.049\\[1ex]
		C(0,1)& 0.5 &   & 0.758 & 0.700 & \bf0.814 & & 0.842 & 0.796 & \bf0.886\\
		& 2   &   & 0.633 & 0.700 & \bf0.814 & & 0.737 & 0.796 & \bf0.886\\
		& 3   &   & 0.572 & 0.700 & \bf0.814 & & 0.690 & 0.796 & \bf0.886\\[1ex]
		Exp(1)& 0.5 &  & \bf0.871 & 0.833 & 0.682 & & \bf0.945 & 0.931 & 0.776\\
		& 2   &   & 0.800 & \bf0.833 & 0.682 & & 0.908 & \bf0.931 & 0.776\\
		& 3         &   & 0.749 & \bf0.833 & 0.682 & & 0.875 & \bf0.931 & 0.776\\[1ex]
		GA(2)  & 0.5 &   & \bf0.507 & 0.441 & 0.426 & & \bf0.648 & 0.568 & 0.500\\
		& 2   & & 0.375 & \bf0.441 & 0.426 & & 0.500 & \bf0.568 & 0.500\\
		& 3   & & 0.311 & \bf0.441 & 0.426 & & 0.423 & \bf0.568 & 0.500\\[1ex]
		GA(0.5)& 0.5 &  & \bf0.993 & 0.991 & 0.914 & & \bf0.999 & 0.998 & 0.960\\
		& 2   & & 0.989 & \bf0.991 & 0.914 & & \bf0.998 & \bf0.998 & 0.960\\
		& 3   & & 0.984 & \bf0.991 & 0.914 & & 0.997 & \bf0.998 & 0.960\\[1ex]
		U(0,1)    & 0.5 & & 0.353 & \bf0.389 & 0.114 & & 0.471 & \bf0.502 & 0.133\\
		& 2   & & \bf0.419 & 0.389 & 0.114 & & \bf0.544 & 0.502 & 0.133\\
		& 3   & & \bf0.400 & 0.389 & 0.114 & & \bf0.520 & 0.502 & 0.133\\[1ex]
		Beta(2,1) & 0.5 & & \bf0.416 & 0.410 & 0.058 & & \bf0.556 & 0.540 & 0.088\\
		& 2   & & \bf0.431 & 0.410 & 0.058 & & \bf0.555 & 0.540 & 0.088\\
		& 3   & & 0.399 & \bf0.410 & 0.058 & & 0.521 & \bf0.540 & 0.088\\\hline
	\end{tabular}
\end{table}

\subsection{Test for Exponentiality} Let $X_1,\cdots,X_n$ be a random sample and we want to test whether this sample comes from Exp($\lambda$) distribution. We use the statistic $T_{\alpha}(F,F_{\hat{\theta}})$ and we replace the unknown parameter of the distribution by its MLE $\hat{\lambda}=\frac{1}{\bar{X}}$. We simulate 10000 random samples of size $n$ = 10, 15, 20 and 25 from Exp(1) population and calculate the critical points at 5\% level of significance. We choose $\alpha$ = 2 and $m=[\sqrt{n}+0.5]$ and compare the power of the tests with two popular tests based on entropy and cumulative residual entropy (Rao et al., 2004). The entropy-based test, introduced by Ebrahimi et al. (1992), has the test statistic as $$KL_{mn}=\exp(H_{mn}-\log\bar{X}-1),$$ where $H_{mn}$ is the Vasicek's entropy estimator. The null hypothesis is rejected if $KL_{mn}<C_1$, where $C_1$ is obtained from the empirical distribution of $KL_{mn}$. Baratpour and Rad (2012) proposed cumulative residual entropy-based test for exponentiality with the test statistic $$T=\frac{\sum_{i=1}^{n-1}\frac{n-i}{n}\log\left(\frac{n-i}{n}\right)(X_{(i+1):n}-X_{i:n})+\frac{\sum_{i=1}^{n}X_i^2}{2\sum_{i=1}^{n}X_i}}{\frac{\sum_{i=1}^{n}X_i^2}{2\sum_{i=1}^{n}X_i}}.$$
The null hypothesis is rejected when the calculated value of $T>C_2$, where the critical point $C_2$ is obtained from the null distribution of $T$. For reference distribution, we consider Weibull (WE), gamma (GA) and log-normal (LN) distributions having monotone decreasing, increasing and non-monotone hazard functions. For monotone decreasing hazard alternatives, we take WE(0.5), GA(0.4) and LN(2). For monotone increasing hazard alternatives, we consider WE(2), GA(2) and GA(3), while for non-monotone hazard alternatives, we consider LN(0.6) and LN(1.2). The power of the tests so obtained are provided in Tables \ref{tab5} and \ref{tab6}. From these tables, we find that proposed test performs better than the other two tests for monotone decreasing hazard alternative distributions.

\begin{table}[h]
	\centering
	\caption{Power of the tests for Exp(1) distribution.}\label{tab5}
	
	\begin{tabular}{|c| c |c c c |c| c| c c c|}
		\hline
		$n(m)$ & Alternatives & $\widehat{T}_{\alpha}(F,F_{\theta})$ & $KL_{m,n}$ & $T$ & $n(m)$ & Alternatives & $\widehat{T}_{\alpha}(F,F_{\theta})$ & $KL_{m,n}$ & $T$\\[1ex]
		\hline
		10(3) & WE(0.5) & \bf0.302 & 0.113 & 0.175  & 15(4) & WE(0.5) & \bf0.543 & 0.245 & 0.378\\
		& WE(2)   & 0.460 & \bf0.716 & 0.657 & & WE(2)   & 0.692 & \bf0.870 & 0.827\\
		& GA(0.4)   & \bf0.206 & 0.053 & 0.048 & & GA(0.4) & \bf0.360 & 0.116 & 0.161\\
		& GA(2) & 0.188 & \bf0.349 & 0.268 & & GA(2) & 0.286 & \bf0.447 & 0.332\\
		& GA(3) & 0.387 & \bf0.653 & 0.508 & & GA(3) & 0.610 & \bf0.803 &0.660\\
		& LN(0.6)  & 0.403 & \bf0.672 & 0.432 & & LN(0.6)  & 0.640 & \bf0.811 & 0.508\\
		& LN(1.2)  & \bf0.086 & 0.052 & 0.077 & & LN(1.2)  & 0.123 & 0.060  & \bf0.154\\
		& LN(2)    & \bf0.386 & 0.204 & 0.306 & & LN(2)    & \bf0.608 & 0.386 & 0.549\\
		& Exp(1) & 0.047 & 5.37 & 4.72 & & Exp(1)        & 0.048  & 5.09  & 4.82\\
		\hline	
	\end{tabular}
\end{table}
\begin{table}[h!]
	\centering
	\caption{Power of the tests for Exp(1) distribution.}\label{tab6}
	
	\begin{tabular}{|c| c |c c c |c| c| c c c|}
		\hline
		$n(m)$ & Alternatives & $\widehat{T}_{\alpha}(F,F_{\theta})$ & $KL_{m,n}$ & $T$ & $n(m)$ & Alternatives & $\widehat{T}_{\alpha}(F,F_{\theta})$ & $KL_{m,n}$ & $T$\\[1ex]
		\hline
		20(4) & WE(0.5) & \bf0.747 & 0.548 & 0.542  & 25(5) & WE(0.5) & \bf0.850 & 0.631 & 0.664\\
		& WE(2)   & 0.764 & \bf0.937 & 0.917 & & WE(2)   & 0.900 & \bf0.972 & 0.968\\
		& GA(0.4)   & \bf0.607 & 0.339 & 0.250 & & GA(0.4) & \bf0.695 & 0.391 & 0.330\\
		& GA(2) & 0.268 & \bf0.524 & 0.389 & & GA(2) & 0.355 & \bf0.588 & 0.446\\
		& GA(3) & 0.665 & \bf0.893 & 0.736 & & GA(3) & 0.812 & \bf0.949 & 0.831\\
		& LN(0.6)  & 0.700 & \bf0.910 & 0.576 & & LN(0.6)  & 0.852 & \bf0.954 & 0.646\\
		& LN(1.2)  & 0.162 & 0.122 & \bf0.215 & & LN(1.2)  & 0.191 & 0.123  & \bf0.291\\
		& LN(2)    & \bf0.774 & 0.660 & 0.720 & & LN(2)    & \bf0.876 & 0.743 & 0.825\\
		& Exp(1) & 0.051 & 0.054 & 0.047 & & Exp(1)        & 0.051  & 0.050  & 0.048\\
		\hline	
	\end{tabular}
\end{table}

\subsection{Goodness-of-fit Tests for Progressive Type-II Samples} Goodness-of-fit tests for certain distributions under PC-II censored samples can be developed using Tsallis divergence. Suppose $n$ identical units are placed on a lifetest, each having a common lifetime distribution with pdf $f_{\theta}$ and cdf $F_{\theta}$. Let $X_{1:r:n},\cdots,X_{r:r:n}$ be a PC-II censored sample with censoring scheme $(R_1,R_2,\cdots,R_r)$. Based on this sample, we want to test the hypothesis whether the data come from $f_{\theta}$ or not. Let $F_{r:n}(x)$ denote the empirical distribution function of the PC-II censored data. Balakrishnan and Sandhu (1995) expressed $F_{r:n}(x)$ as

$$F_{r:n}(x)=
\begin{cases}
0,\;\;\;\;\mbox{if}\;x<X_{1:r:n},\\
\alpha_{i:r:n},\;\;\;\mbox{if}\;X_{i:r:n}\leq x<X_{i+1:r:n},\;i=1,2,\cdots r-1,\\
\alpha_{r:r:n},\;\;\;\mbox{if}\;x \geq X_{r:r:n},
\end{cases}$$
where $\alpha_{i:r:n}=E(U_{i:r:n})$. Now empirical Tsallis divergence between $F_{r,n}$ and $F$ under PC-II sample can be defined, analogous to (\ref{e18}), as 

\begin{eqnarray}\label{e28}
\hat{T_{\alpha}}(F_{r,n},F)&=&\frac{1}{\alpha-1}\left( \frac{1}{r}\sum_{i=1}^{r}\left(\frac{U_{i+m:r:n}-U_{i-m:r:n}}{F_{\hat{\theta}}(X_{i+m:r:n})-F_{\hat{\theta}}(X_{i-m:r:n})} \right)^{\alpha-1} -1 \right)\nonumber\\
&\approx&\frac{1}{\alpha-1}\left( \frac{1}{r}\sum_{i=1}^{r}\left(\frac{ \prod_{k=r-(i-m)+1}^{r}\beta_k - \prod_{k=r-(i+m)+1}^{r}\beta_k }{F_{\hat{\theta}}(X_{i+m:r:n})-F_{\hat{\theta}}(X_{i-m:r:n})} \right)^{\alpha-1} -1 \right),
\end{eqnarray}

\noindent where $\hat{\theta}$ is an estimator of $\theta$ under PC-II setup. Note that, $\hat{T_{\alpha}}(F_{r,n},F)$ is a semi-parametric estimator for Tsallis divergence between empirical density and fitted parametric density under PC-II censored data. We can use $\hat{T_{\alpha}}(F_{r,n},F)$ as a test statistics for testing goodness-of-fit for PC-II censored data.

\subsubsection{Exponentiality under Progressive Type-II Samples} For exponentiality testing, the test statistic is given by $$\hat{T_{\alpha}}(F_{r,n},F)=\frac{1}{\alpha-1}\left( \frac{1}{r}\sum_{i=1}^{r}\left(\frac{ \prod_{k=r-(i-m)+1}^{r}\beta_k - \prod_{k=r-(i+m)+1}^{r}\beta_k }{\exp(-\hat{\theta}X_{i-m:r:n})-\exp(-\hat{\theta}X_{i+m:r:n})} \right)^{\alpha-1} -1 \right),$$
where $\hat{\theta}=\frac{r}{\sum_{i=1}^{r}(1+R_i)X_{i:r:n}}$ is the MLE for $\theta$ under PC-II censoring. We will reject exponentiality hypothesis if $\hat{T_{\alpha}}(F_{r,n},F)$ is large. We simulate 5000 random samples from Exp(1) distributions under various PC-II censoring schemes and obtain critical points at 10\% significance level. Entropy based exponentiality tests under PC-II censoring perform better when the alternative distribution has increasing hazard function, see Balakrishnan et al. (2007). So, we consider WE(2) and GA(2) distributions for power calculation and compare the performance of the proposed test with the test proposed by Balakrishnan et al. (2007). Let us denote it by $T_{PC}$. We choose $\alpha$ = 2 and $m=[\sqrt{r}+0.5]$ for power computation and provide the results in Table \ref{tab7}. From the table we have observed that neither test outperforms the other for all the cases. Performance of both tests increases when sample size increases. Both tests perform better for WE(2) than for GA(2) alternative. The proposed test does attain the nominal level of significance. We mostly considered one step censoring schemes for illustrations. For other censoring schemes, the proposed tests outperforms $T_{PC}$ for both the alternative distributions.

\begin{table}[h]
	\centering
	\caption{Power of exponentiality tests under PC-II samples at 10\% significance level.}\label{tab7}
	\begin{tabular}{|c| c |c |c c c| c c|}
		\hline
		& & & \multicolumn{3}{c}{$\hat{T_{\alpha}}(F_{r,n},F)$}&\multicolumn{2}{c|}{$T_{PC}$}\\
		\hline
		$n$ & $r$ & Schemes & Exp(1) & WE(2) & GA(2) & WE(2) & GA(2)\\
		\hline
		20 & 10 & ($10,0*9$) & 0.0952 & 0.8054 & 0.3518 & \bf0.9280 & \bf0.6110\\
		& & ($0,10,0*8$) & 0.0984 & 0.8044 & 0.3478 & \bf0.9382 & \bf0.6735\\
		& & ($0*8,10,0$) & 0.1050 & 0.5098 & 0.2580 & \bf0.5815 & \bf0.3803\\
		& & ($0*9,10$)   & 0.1044 & 0.6650 & 0.4404 & \bf0.7295 & \bf0.5277\\
		& & ($1*10$) & 0.0972 & \bf0.9734 & \bf0.6684 & 0.8650 & 0.6044\\\hline
		20 & 15 & ($5,0*14$) & 0.1050 & \bf0.9746 & 0.6210 & 0.9711 & \bf0.6789\\
		& & ($0,5,0*13$) & 0.099 & 0.9698 & 0.5902 & \bf0.9755 & \bf0.6982\\
		& & ($0*13,5,0$) & 0.1000 & \bf0.8764 & \bf0.5234 & 0.8236 & 0.4966\\
		& & ($0*14,5$)   & 0.0948 & 0.8750 & 0.5646 & \bf0.8900 & \bf0.6020\\
		& & ($1,1,0*5,1,0*5,1,1$) & 0.1050 & \bf0.9772 & \bf0.7106 & 0.9592 & 0.6970\\\hline
		30 & 20 & ($10,0*19$) & 0.0960 & \bf0.9894 & 0.6708 & 0.9870 & \bf0.7718\\
		& & ($0,10,0*18$) & 0.1062 & \bf0.9898 & 0.6600 & 0.9849 & \bf0.7913\\
		& & ($0*18,10,0$) & 0.1014 & \bf0.8866 & \bf0.5292 & 0.7570 & 0.4684\\
		& & (0*19,10) & 0.1044 & 0.9276 & 0.6428 & \bf0.9563 & \bf0.7189\\
		& & ($1,0,1,0,\cdots,1,0$) & 0.0918 & \bf0.9978 & \bf0.7866 & 0.9741 & 0.6877\\\hline 
		
	\end{tabular}
\end{table}

\subsection{Data Analyses} We analyse three real datasets to illustrate the performance of the proposed tests. The first dataset shows the effectiveness of the proposed normality test, while the last two data sets illustrate the proposed exponentiality test. Also, we analysed three PC-II censored data sets as well. 

\subsubsection*{Data Set 1} This data set, given in Gun et al. (2008), gives the weights at birth (in kg.) of 15 babies born in a hospital in Calcutta. The observations are given as follows:
\begin{table}[h!]
	\centering
	\begin{tabular}{cccccccc}
	2.79, & 2.56, & 3.64, & 3.01, & 2.16, & 2.25, & 3.19, & 3.06,\\
	2.61, & 3.10, & 3.42, & 3.55, & 3.88, & 3.51, & 3.82.
		
	\end{tabular}
\end{table}
\\
 We now apply the test for normality. Here, the sample size is $n$ = 15 so that $m$ = 4. The sample mean and the sample standard deviation are found to be 3.07 and 0.4899, respectively. For $\alpha$ = 0.5, the value of the test statistic is 0.20127 and the p-value is 0.5557. For $\alpha$ = 2, the test statistic value is 0.40210 and the p-value is 0.3734. For $\alpha$ = 3, the test statistic value is 0.69992 and the p-value is 0.3492. So, the p-values are quite large in all three cases and so we can conclude that the dataset of babies' weights at birth is normally distributed.

\subsubsection*{Data Set 2} The following dataset provides the mileages for 19 military personnel
carriers that failed in service:
\begin{table}[h!]
	\centering
	\begin{tabular}{cccccccccc}
		162, & 200, & 271, & 320, & 393, & 508, & 539, & 629, & 706, & 778, \\
		884, & 1003, & 1101, & 1182, & 1463, & 1603, & 1984, & 2355, & 2880. &
	\end{tabular}
\end{table}
\\
 This dataset was considered by Grubbs (1971), who found that the exponential distribution fits this dataset well. Now, we apply our exponentiality test to these data. Here, $n$ = 19 so that $m$ = 4 (based on the formula $m=[\sqrt{n}+0.5]$). We take $\alpha$ = 2. The MLE for the exponential parameter is $\frac{1}{\bar{X}}=0.001$. The test statistic value is 0.43802 and the p-value is 0.3171, which supports the hypothesis that these data follow the exponential distribution.

\subsubsection*{Data Set 3} We consider a dataset that represents quantity of thousands of cycles to failure for electrical appliances; see Lawless (2011). The sample data is as follows:
\begin{table}[h!]
	\centering
	\begin{tabular}{cccccccccc}
		0.014, & 0.034, & 0.059, & 0.061, & 0.069, & 0.080, & 0.123, & 0.142, & 0.165, & 0.210,\\
		0.381, & 0.464, & 0.479, & 0.556, & 0.574, & 0.839, & 0.917, & 0.969, & 0.991, & 1.064,\\
		1.088, & 1.091, & 1.174, & 1.270, & 1.275, & 1.355, & 1.397, & 1.477, & 1.578, & 1.649,\\
		1.702, & 1.893, & 1.932, & 2.001, & 2.161, & 2.292, & 2.326, & 2.337, & 2.628, & 2.785,\\
		2.811, & 2.886, & 2.993, & 3.122, & 3.248, & 3.715, & 3.790, & 3.857, & 3.912, & 4.100.
	\end{tabular}
\end{table}
\\
Yousaf et al. (2019) found that Chen distribution fits this dataset better than the exponential
distribution. Chen distribution has its distribution function as $$F(x;\eta,\lambda)=1-\exp\left(\eta\left(1-\exp( x^{\lambda}) \right) \right),\;x>0,\;0<\eta<\infty,\;\lambda>0.$$ Recently, Xiong et al. (2022) applied many popular tests like
Kolmogorov-Smirnov, Kuiper, Cramer-von Mises and Anderson-Darling, and observed that they all fail to reject the null hypothesis that the data are from exponential distribution, at 5\% level of significance. We take $\alpha$ = 2 and since $n$ = 50, and choose $m$ = 7. The MLE for the exponential parameter is $0.64073$ and the value of the test statistic is 0.41352. The p-value is 0.0144, and so we reject the null hypothesis of exponentiality, at 5\% level of significance.

\begin{table}[h]
	\centering
	\caption{Progressive censored sample generated from data set 2 with ($n, m$) = (19, 10) and $R=(9,0*9)$.}\label{tab8}
	\resizebox{14cm}{!}{
		\begin{tabular}{c c c c c c c c c c c}
			\hline	
			$i$      &  1 &   2   & 3 & 4 & 5 & 6 & 7 & 8 & 9 & 10 \\\hline
			$x_{i:10:19}$ & 162 & 200 & 393 & 508 & 539 & 778 & 884 & 1003 & 1463 & 1984 \\\hline
			$R_i$ & 9 & 0 & 0 & 0 & 0 & 0 & 0 & 0 & 0 & 0\\\hline
	\end{tabular}}	
\end{table}

\begin{table}[h]
	\centering
	\caption{Progressive censored sample generated from data set 2 with ($n, m$) = (19, 10) and $R=(0*9,9)$.}\label{tab9}
	\resizebox{14cm}{!}{
		\begin{tabular}{c c c c c c c c c c c}
			\hline	
			$i$      &  1 &   2   & 3 & 4 & 5 & 6 & 7 & 8 & 9 & 10 \\\hline
			$x_{i:10:19}$ & 162 & 200 & 271 & 393 & 508 & 539 & 629 & 706 & 778 & 884 \\\hline
			$R_i$ & 0 & 0 & 0 & 0 & 0 & 0 & 0 & 0 & 0 & 9\\\hline
	\end{tabular}}	
\end{table}

\begin{table}[h!]
	\centering
	\caption{Progressive censored sample generated from data set 2 with ($n, m$) = (19, 10) and $R=(5,0*8,4)$.}\label{tab10}
	\resizebox{14cm}{!}{
		\begin{tabular}{c c c c c c c c c c c}
			\hline	
			$i$      &  1 &   2   & 3 & 4 & 5 & 6 & 7 & 8 & 9 & 10 \\\hline
			$x_{i:10:19}$ & 162 & 200 & 271 & 320 & 393 & 539 & 706 & 778 & 1003 & 1182 \\\hline
			$R_i$ & 5 & 0 & 0 & 0 & 0 & 0 & 0 & 0 & 0 & 4\\\hline
	\end{tabular}}	
\end{table}
\subsubsection*{Progressively Type-II Censored Data Sets} From data set 2 of Grubbs et al. (1971), we generate three PC-II censored data set and apply the proposed exponentiality test to these three data. The data dets along with censoring schemes are provided in Tables \ref{tab8}-\ref{tab10}, respectively. We calculate the test statistics and corresponding p-values for these three data sets. 

\begin{itemize}
	\item For the 1st data set, $\hat{T_{\alpha}}(F_{r,n},F)$ = 0.20322 and p-value is 0.617 and hence, we cannot reject the null hypothesis which is the data follows exponential distribution.
	
	\item For the 2nd data, $\hat{T_{\alpha}}(F_{r,n},F)$ = 0.30578 and p-value is 0.213 which is in favour of the null hypothesis.
	
	\item For the 3rd data, $\hat{T_{\alpha}}(F_{r,n},F)$ = 0.41095 and p-value is 0.323, suggesting acceptance of exponentiality hypothesis.
\end{itemize}

\section{Concluding remarks}\label{S9} In this paper, we have introduced four estimators for Tsallis entropy measure and have compared their performance. We have also discussed the estimation of Tsallis divergence measure and have developed goodness-of-fit tests for normal and exponential distributions based on it. We have compared the power of the test with various other tests using simulations and have shown that the proposed tests perform well. We have analysed three real datasets to show the effectiveness of the proposed tests. For comparing the proposed estimators, we have used two choices of the window size $m$ as $m=[\sqrt{n}+0.5]$ and $m=\left[ \frac{n}{3}\right] $, where $n$ is the sample size. Other choices of $m$ could also be explored. We only used one estimator for Tsallis divergence and it is used along with $m=[\sqrt{n}+0.5]$ to develop goodness-of-fit tests. Other estimators can also be used for this purpose. We have introduced an estimator for Tsallis entropy based on progressively type-II censored data and have calculated its mean and variance for various progressive type-II censoring schemes using simulations. From the obtained results, we have observed that the variance of the estimator decreases as sample size increases. We have further proposed Tsallis divergence estimation for progressive type-II censored data and developed an exponentiality test when the data is progressively type-II censored. We have studied the power of the test for monotone increasing hazard alternatives using simulation and found that the test performs well. We have considered $\alpha$ = 2, $m=[\sqrt{r}+0.5]$ and a few censoring schemes for power comparisons. Other choices of $\alpha$ and $m$ can be considered as well. Also, we have proposed a quantile density function based estimator for Tsallis entropy, studied its asymptotic properties, and evaluated its performance for normal, exponential and Govindarajulu distributions. This estimator uses a kernel density estimator and we took the bandwidth using normal reference rule as $h_n=1.06\sigma n^{-\frac{1}{5}}$. For non-normal data, bias and MSE of the estimator can also be compared for different choices of $h_n$. We have proposed another quantile based estimator for Tsallis entropy. Performance of this estimator can be evaluated in a future study. Optimal choice for bandwidth parameter for both these estimators under non-normal data could be an interesting problem for future research.

\section*{Conflicts of Interest} The authors declare no conflict of interest.

\section*{Funding} No funding is received for this work.

\end{document}